\newtheorem{lemma}{Lemma}
\newtheorem{proposition}{Proposition}
\newtheorem{theorem}{Theorem}
\theoremstyle{definition}
\theoremstyle{remark}
\newtheorem{remark}{Remark}
\newcommand{\beq}{\begin{eqnarray}}
\newcommand{\eeq}{\end{eqnarray}}
\newcommand{\beqnn}{\begin{eqnarray*}}
\newcommand{\eeqnn}{\end{eqnarray*}}
\newcommand{\rd}{\partial}
\newcommand{\tp}[1]{\:{}^{\mathrm{t}}#1}
\newcommand{\ZZ}{\mathbb{Z}}
\newcommand{\bst}{\boldsymbol{t}}
\newcommand{\bsx}{\boldsymbol{x}}
\newcommand{\bsy}{\boldsymbol{y}}
\newcommand{\bszero}{\boldsymbol{0}}
\newcommand{\calP}{\mathcal{P}}
\newcommand{\GL}{\mathrm{GL}}
\newcommand{\gl}{\mathrm{gl}}
\newcommand{\LHS}{\mathrm{LHS}}
\newcommand{\RHS}{\mathrm{RHS}}
\begin{document}

\title{Generalized Ablowitz-Ladik hierarchy in topological string theory}
\author{Kanehisa Takasaki
\thanks{E-mail: takasaki@math.h.kyoto-u.ac.jp}\\
{\normalsize Graduate School of Human and Environmental Studies}\\ 
{\normalsize Kyoto University}\\
{\normalsize Sakyo, Kyoto 606-8501, Japan}}
\date{}
\maketitle 

\begin{abstract}
This paper addresses the issue of integrable structures  
in topological string theory on generalized conifolds.   
Open string amplitudes of this theory can be expressed 
as the matrix elements of an operator on the Fock space 
of 2D charged free fermion fields.  The generating function 
of these amplitudes with respect to the product of 
two independent Schur functions becomes a tau function 
of the 2D Toda hierarchy.  The associated Lax operators 
turn out to have a particular factorized form.  This factorized form 
of the Lax operators characterizes a generalization of 
the Ablowitz-Ladik hierarchy embedded in the 2D Toda hierarchy.  
The generalized Ablowitz-Ladik hierarchy is thus identified 
as a fundamental integrable structure of topological string theory 
on the generalized conifolds.  
\end{abstract}

\begin{flushleft}
Pacs Numbers: 02.30.Ik, 11.25.Hf\\
2010 Mathematics Subject Classification:  17B80, 35Q55, 81T30\\
Key words: topological string, generalized conifold, 
free fermion, quantum torus, quantum dilogarithm, 
Toda hierarchy, Ablowitz-Ladik hierarchy
\end{flushleft}


\section{Introduction}

This paper presents an extension of our previous work \cite{Takasaki13} 
on the integrable structure of a modified ``melting crystal model''. 
The partition function of the previous model is a generating function 
of (equivariant) local Gromov-Witten invariants of the resolved conifold 
\cite{BP08}. The relevance of the Ablowitz-Ladik hierarchy 
\cite{AL75,Vekslerchik97}  (equivalently, 
the relativistic Toda hierarchy \cite{Ruijsenaars90,KMZ96,Suris97}) 
to this model was conjectured by Brini from an order-by-order analysis 
of the genus expansion of the partition function \cite{Brini10}.   
We deformed the modified melting crystal model 
by two sets of external potentials 
with coupling constants $\bst = (t_1,t_2,\ldots)$ 
and $\bar{\bst}= (\bar{t}_1,\bar{t}_2,\ldots)$, 
and employed the method \cite{NT07,NT08} developed 
for the genuine melting crystal model \cite{ORV03,MNTT04} 
to show that the deformed partition function is essentially 
a tau function $\tau(s,\bst,\bar{\bst})$ of the 2D Toda hierarchy 
\cite{UT84,TT95}.  We further found that the associated Lax operators 
of the Toda hierarchy have a particular factorized form, 
which coincides with the condition  that characterizes 
the Ablowitz-Ladik hierarchy as a reduction of the Toda hierarchy \cite{BCR11}. 
Thus the solution of the Toda hierarchy in question turns out 
to be a solution of the Ablowitz-Ladik hierarchy.  

In the perspective of topological string theory, the tau function 
of the modified melting crystal model is a generating function 
of open string amplitudes on the resolved conifold.    
One can generalize it to toric Calabi-Yau threefolds 
whose toric diagrams are obtained by ``triangulation of a strip''.  
For such manifolds, the method of topological vertex \cite{AKMV03} 
for calculating open string amplitudes simplifies drastically \cite{IKP04}.  
Namely, the amplitudes can be expressed as the matrix elements 
$\langle\lambda|g|\mu\rangle$ of an operator $g$ 
on the Fock space of 2D charged free fermion fields, 
where $\lambda$ and $\mu$ are partitions assigned 
to the leftmost and rightmost external edges of the web diagram, 
and $\langle\lambda|$ and $|\mu\rangle$ are normalized excited states 
labelled by these partitions \cite{EK03,Nagao09,Sulkowski09}.   
The generating function of the matrix elements 
$\langle\lambda,s|g|\mu,s\rangle$ 
(extended to the charge-$s$ sectors of the Fock space) 
with respect to the product $s_\lambda(\bsx)s_\mu(\bsy)$ 
of Schur functions of two independent sets of variables $\bsx,\bsy$ 
becomes a  tau function $\tau(s,\bst,\bar{\bst})$ 
of the Toda hierarchy by a well known transformation of variables 
from $\bsx,\bsy$ to $\bst,\bar{\bst}$ \cite{Takasaki13b}. 

Among these generalizations to toric Calabi-Yau threefolds, 
we now consider the case where the web diagram consists 
of a zig-zag spine of $2N-1$ internal edges and 
$2N+2$ external edges emanating from the spine 
(see Fig. \ref{web-diagram}).   
The resolved conifold amounts to the case of $N = 1$.   
\begin{figure}
\begin{center}
\unitlength=1pt
\begin{picture}(310,200)
\put(10,50){\makebox{$\tp{\lambda}$}}
\put(0,40){\line(1,0){40}}
\put(40,40){\line(0,-1){40}}
\put(40,40){\line(1,1){30}}
\put(55,45){\makebox{$Q_1$}}
\put(70,70){\line(0,1){40}}
\put(70,70){\line(1,0){40}}
\put(85,80){\makebox{$Q_2$}}
\put(110,70){\line(0,-1){40}}
\put(110,70){\line(1,1){30}}
\put(125,75){\makebox{$Q_3$}}
\put(140,100){\line(1,0){35}}
\put(140,100){\line(0,1){40}}
\multiput(180,105)(5,5){5}{\circle*{1}}
\put(205,130){\line(1,0){35}}
\put(240,130){\line(0,-1){40}}
\put(240,130){\line(1,1){30}}
\put(255,135){\makebox{$Q_{2N-1}$}}
\put(270,160){\line(0,1){40}}
\put(270,160){\line(1,0){40}}
\put(290,170){\makebox{$\mu$}}
\end{picture}
\end{center}
\caption{Web diagram of generalized conifold}
\label{web-diagram}
\end{figure}
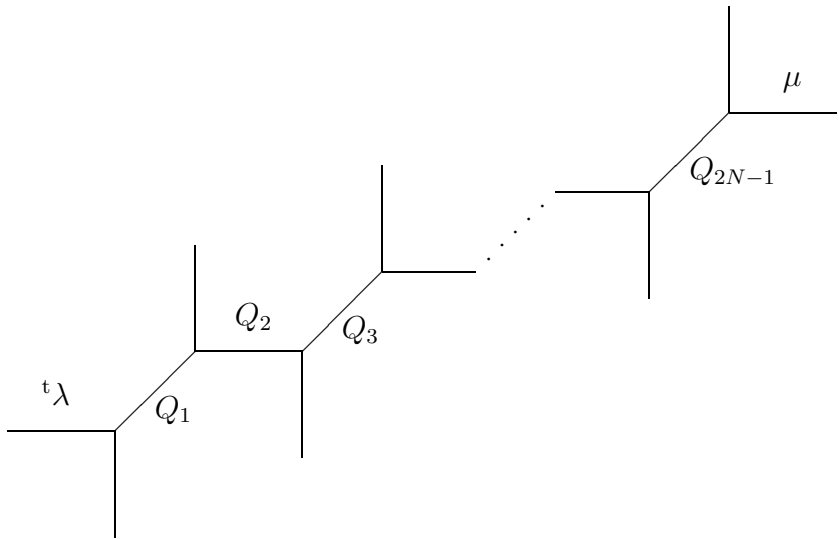
For this ``generalized conifold'', we shall show that 
the associated Lax operators of the Toda hierarchy 
have the factorized form 
\beq
  L = Be^{(1-N)\rd_s}C^{-1},\quad 
  \bar{L}^{-1} = DCe^{(N-1)\rd_s}B^{-1}, 
\label{LLbar-BCop}
\eeq
where $e^{\rd_s}$ ($\rd_s = \rd/\rd s$) denotes 
the shift operator that acts on functions $\psi(s)$ of  $s \in \ZZ$ 
as $e^{\rd_s}\psi(s) = \psi(s+1)$,  
$B$ and $C$ are finite order difference operators of the form 
\beq
\begin{aligned}
  B &= e^{N\rd_s} + b_1(s)e^{(N-1)\rd_s} + \cdots + b_N(s),\\
  C &= 1 + c_1(s)e^{-\rd_s} + \cdots + c_N(s)e^{-N\rd_s}, 
\end{aligned}
\label{general-BCop}
\eeq
and $D$ is a constant.   The factorized form (\ref{LLbar-BCop}) 
is preserved by the flows of the Toda hierarchy, 
hence defines a generalization of the Ablowitz-Ladik hierarchy 
as a subsystem of the Toda hierarchy.    It is this generalized 
Ablowitz-Ladik hierarchy that we identify as a fundamental 
integrable structure of topological string theory 
on the $N$-th generalized conifold.   

To derive this result, we adopt the same strategy 
as in the previous work \cite{Takasaki13}.  Namely, 
we translate the fermionic construction of the tau function 
to the language of $\ZZ\times\ZZ$ matrices.    The operator $g$ 
thus corresponds to an infinite matrix $U$.  The associated solution 
of the Toda hierarchy can be captured by a factorization problem 
of infinite matrices \cite{Takasaki84,NTT95,Takasaki95}. 
Actually, infinite matrices arising therein are avatars of 
difference operators in the Lax formalism of the Toda hierarchy.  
The factors in the factorization problem amount to 
the dressing operators $W,\bar{W}$, which in turn 
determine the Lax operators $L,\bar{L}$.   
We can find an explicit form of these matrices at least 
at the initial time $\bst = \bar{\bst} =\bszero$.  
The initial values of $L$ and $\bar{L}^{-1}$ turns out 
to have the factorized form (\ref{LLbar-BCop}). 
This is enough to conclude that $L$ and $\bar{L}^{-1}$ 
are similarly factorized at all times, because 
the factorized form (\ref{LLbar-BCop}) is preserved 
by the flows of the Toda hierarchy.  

This paper is organized as follows. Section 2 reviews 
the fermionic expression of open string amplitudes 
and the associated tau function.   The operator $g$ 
shows up here as a product of several building blocks. 
Section 3 shows  a dictionary that translates 
the fermionic setting to the language of infinite matrices.    
The matrix $U$ is derived from $g$ by this dictionary. 
Section 4 introduces the matrix factorization problem 
that determines the dressing operators.  
The initial values of the dressing operators 
are explicitly calculated here.   This result 
is used in Section 5 to calculate the initial values 
of the Lax operators.   The initial values are shown 
to have the factorized form (\ref{LLbar-BCop}).  
Section 6 is devoted to the generalized 
Ablowitz-Ladik hierarchy.  Fundamental properties, 
such as the consistency of this system, are proved here.  
As a corollary, the solution of the Toda hierarchy 
obtained from $U$ turns out to be a solution of 
the generalized Ablowitz-Ladik hierarchy.   
Section 7 presents our conclusion.

\section{Tau function in fermionic language}

Let $\psi_n$ and $\psi^*_n$, $n \in \ZZ$, be the Fourier modes 
of 2D charged free fermion fields \cite{MJD-book}
\beqnn
  \psi(z) = \sum_{n\in\ZZ}\psi_nz^{-n-1},\quad 
  \psi^*(z) = \sum_{n\in\ZZ}\psi^*_nz^{-n}. 
\eeqnn
They satisfy the anti-commutation relations 
\beqnn
   \psi_m\psi^*_n + \psi^*_n\psi_n = \delta_{m+n,0},\quad 
   \psi_m\psi_n + \psi_n\psi_m = 0, \quad 
  \psi^*_m\psi^*_n + \psi^*_n\psi^*_m = 0. 
\eeqnn
The Fock and dual Fock space can be decomposed 
into the eigenspaces of the $U(1)$-charge operator 
\beqnn
  J_0 = \sum_{n\in\ZZ}{:}\psi_{-n}\psi^*_n{:}\,. 
\eeqnn
Let $\langle s|$ and $|s\rangle$, $s \in \ZZ$, denote 
the ground states in the charge-$s$ sector: 
\beqnn
  \langle s| = \langle-\infty|\cdots\psi^*_{s-1}\psi^*_s,\quad 
  |s\rangle = \psi_{-s}\psi_{-s+1}\cdots|-\infty\rangle
\eeqnn
Normalized excited states are labelled by partitions 
$\lambda = (\lambda_1,\lambda_2,\ldots)$, 
$\lambda_1 \ge \lambda_2 \ge \cdots \ge 0$, 
as $\langle \lambda,s|$ and $|\lambda,s\rangle$:
\beqnn
  \langle\lambda,s| 
  = \langle-\infty|\cdots\psi^*_{\lambda_2+s-1}\psi^*_{\lambda_1+s},\quad 
  |\lambda,s\rangle 
  = \psi_{-\lambda_1-s}\psi_{-\lambda_2-s+1}\cdots|-\infty\rangle.
\eeqnn

Building blocks of the fermionic expression of open string amplitudes 
are the special fermion bilinears 
\beqnn
\begin{gathered}
  J_k = \sum_{n\in\ZZ}{:}\psi_{-n}\psi^*_{n+k}{:},\quad k \in \ZZ,\\
  L_0 = \sum_{n\in\ZZ}n{:}\psi_{-n}\psi^*_n{:},\quad
  W_0 = \sum_{n\in\ZZ}n^2{:}\psi_{-n}\psi^*_n{:} 
\end{gathered}
\eeqnn
and the vertex operators \cite{OR01,BY08}
\beqnn
  \Gamma_{\pm}(z) 
    = \exp\left(\sum_{k=1}^\infty\frac{z^k}{k}J_{\pm k}\right),\quad
  \Gamma'_{\pm}(z) 
    = \exp\left(-\sum_{k=1}^\infty\frac{(-z)^k}{k}J_{\pm k}\right). 
\eeqnn
It is convenient to define the multi-variate vertex operators 
$\Gamma_{\pm}(\bsx)$ and $\Gamma'_{\pm}(\bsx)$, 
$\bsx = (x_1,x_2,\ldots)$, as well: 
\beqnn
  \Gamma_{\pm}(\bsx) = \prod_{i=1}^\infty\Gamma_{\pm}(x_i),\quad 
  \Gamma'_{\pm}(\bsx) = \prod_{i=1}^\infty\Gamma'_{\pm}(x_i). 
\eeqnn

These operators are ``neutral'' or ``charge-preserving'', 
i.e., do not mix different charge sectors.  $L_0$ and $W_0$ 
are ``diagonal'' and have the $s$-dependent matrix elements 
\begin{align}
  \langle\lambda,s|L_0|\mu,s\rangle   
    &= \delta_{\lambda\mu}
       \left(|\lambda| + \frac{s(s+1)}{2}\right), 
  \label{<ls|L0|ms>}\\
  \langle\lambda,s|W_0|\mu,s\rangle
    &= \delta_{\lambda\mu}
       \left(\kappa(\lambda) + (2s+1)|\lambda| 
        + \frac{s(s+1)(2s+1)}{6}\right), 
  \label{<ls|W0|ms>}
\end{align}
where $\kappa(\lambda)$ denotes the quadratic Casimir invariant 
\beqnn
  \kappa(\lambda) 
  = \sum_{i\ge 1}\lambda_i(\lambda_i-2i+1) 
  = \sum_{i\ge 1}\left((\lambda_i - i + 1/2)^2 - (-i + 1/2)^2\right) 
\eeqnn
of the partition $\lambda = (\lambda_1,\lambda_2,\ldots)$. 
The matrix elements of $\Gamma_{\pm}(\bsx)$ and 
$\Gamma'_{\pm}(\bsx)$ are independent of 
the charge sector, and become skew Schur functions 
\cite{MJD-book,Macdonald-book}:  
\begin{gather}
  \langle\lambda,s|\Gamma_{-}(\bsx)|\mu,s\rangle
  = \langle\mu,s|\Gamma_{+}(\bsx)|\lambda,s\rangle
  = s_{\lambda/\mu}(\bsx),
  \label{<ls|Gamma|ms>}\\
  \langle\lambda,s|\Gamma'_{-}(\bsx)|\mu,s\rangle 
  = \langle\mu,s|\Gamma'_{+}(\bsx)|\lambda,s\rangle
  = s_{\tp{\lambda}/\tp{\mu}}(\bsx), 
  \label{<ls|Gamma'|ms>}
\end{gather}
where $\tp{\lambda}$ and $\tp{\mu}$ denote 
the conjugate (or transposed) partitions 
of $\lambda$ and $\mu$.  

$J_k$'s play a fundamental role 
in the fermionic expression of tau functions 
of the Toda hierarchy as well \cite{Takebe91}.  
General tau functions can be expressed as 
\beq
  \tau(s,\bst,\bar{\bst}) 
  = \langle s|\exp\left(\sum_{k=1}^\infty t_kJ_k\right)
    g \exp\left(- \sum_{k=1}^\infty\bar{t}_kJ_{-k}\right)
    |s\rangle, 
\label{general-tau}
\eeq
where $g$ is an element of the Clifford group $\widehat{\GL}(\infty)$ 
\cite{MJD-book} generated by  the exponentials $e^{\hat{X}}$ 
of neutral fermion bilinears  
\beqnn
  \hat{X} = \sum_{m,n\in\ZZ}x_{mn}{:}\psi_{-m}\psi^*_n{:}\,. 
\eeqnn
The tau function of the modified melting crystal model 
\cite{Takasaki13} is obtained from the special operator 
\beq
  g = q^{W_0/2}\Gamma_{-}(q^{-\rho})\Gamma_{+}(q^{-\rho})Q^{L_0}
       \Gamma'_{-}(q^{-\rho})\Gamma'_{+}(q^{-\rho})q^{-W_0/2},  
\label{conifold-g}
\eeq
where $q$ and $Q$ are positive constants,  $q$ is restricted 
to the range $0 < q < 1$,  and $\Gamma_{\pm}(q^{-\rho})$ 
and $\Gamma'_{\pm}(q^{-\rho})$ are the specialization 
of $\Gamma_{\pm}(\bsx)$ and $\Gamma'_{\pm}(\bsx)$ to 
\beqnn
  q^{-\rho} = \left(q^{1/2},q^{3/2},\ldots,q^{k+1/2},\ldots\right). 
\eeqnn

Topological string theory on the $N$-th generalized conifold 
is related to the following generalization of (\ref{conifold-g}) 
\cite{Nagao09,Sulkowski09}: 
\begin{align}
  g &= q^{W_0/2}\Gamma_{-}(q^{-\rho})\Gamma_{+}(q^{-\rho})Q_1^{L_0}
       \Gamma'_{-}(q^{-\rho})\Gamma'_{+}(q^{-\rho})Q_2^{L_0}\nonumber\\
    &\quad\mbox{}\times 
         \Gamma_{-}(q^{-\rho})\Gamma_{+}(q^{-\rho})Q_3^{L_0}
         \Gamma'_{-}(q^{-\rho})\Gamma'_{+}(q^{-\rho})Q_4^{L_0}\nonumber\\
    &\quad\mbox{}\times\cdots\nonumber\\
    &\quad\mbox{}\times 
         \Gamma_{-}(q^{-\rho})\Gamma_{+}(q^{-\rho})Q_{2N-1}^{L_0}
         \Gamma'_{-}(q^{-\rho})\Gamma'_{+}(q^{-\rho})q^{-W_0/2}. 
\label{gencon-g}
\end{align}
$Q_1,\ldots,Q_{2N-1}$ are ``K\"ahler parameters'' assigned 
to the internal edges of the web diagram (see Fig. \ref{web-diagram}). 
The matrix elements $\langle\lambda|g|\mu\rangle$ of $g$ 
in the Fock space give a compact expression 
of the open string amplitudes $Z_{\tp{\lambda}\emptyset\cdots\emptyset\mu}$ 
in the setting where the leftmost and rightmost external edges 
are assigned with the non-trivial partitions $\tp{\lambda},\mu$ 
whereas the other external edges are given the zero partition 
$\emptyset$
\footnote{As we show in Appendix A, the matrix elements 
$\langle\lambda|g|\mu\rangle$ are slightly different 
from the true amplitudes 
$Z_{\tp{\lambda}\emptyset\cdots\emptyset\mu}$.  
The difference, however, is rather immaterial and does not 
affect our consideration on the integrable structure.}.  

The logarithm of $Z_{\tp{\lambda}\emptyset\cdots\emptyset\mu}$ 
is a generating functions of all genus 
local Gromov-Witten invariants 
with respect to the parameters $q,Q_1,\ldots,Q_{2N-1}$.  
The partitions specify the topological type 
(i.e., winding numbers) of boundaries of the string world sheet. 
One can assign non-trivial partitions $\beta_1,\ldots,\beta_{2N}$ 
to the intermediate external edges to define more general 
amplitudes $Z_{\tp{\lambda}\beta_1\ldots\beta_{2N}\mu}$ as well. 
We refer a precise definition of these notions 
to Mari\~{n}o's book \cite{Marino-book}.  
The fermionic expression of these amplitudes 
can be derived from a combinatorial expression 
obtained by the method of topological vertex
(see Appendix A).  

The matrix elements $\langle\lambda|g|\mu\rangle$ 
and their extensions $\langle\lambda,s|g|\mu,s\rangle$ 
to the charge-$s$ sectors can be assembled 
to the generating functions 
\beq
   \sum_{\lambda,\mu\in\calP}s_{\lambda}(\bsx) 
     \langle\lambda,s|g|\mu,s\rangle s_\mu(\bsy), 
\label{gen-function}
\eeq
where $\calP$ is the set of all partitions, and 
$s_\lambda(\bsx)$ and $s_\mu(\bsy)$ are the Schur functions 
of multi-variables $\bsx = (x_1,x_2,\ldots)$ 
and $\bsy = (y_1,y_2,\ldots)$. These generating functions 
become the tau function (\ref{general-tau})  
by changing variables from $(\bsx,\bsy)$ 
to $(\bst,\bar{\bst})$ as 
\beq
  t_k = \frac{1}{k}\sum_{i\ge 1} x_i^k, \quad 
  \bar{t}_k = - \frac{1}{k}\sum_{i\ge 1} y_i^k.  
\eeq

\section{Translation to infinite matrices}

The foregoing fermionic setting can be translated 
to the language of infinite matrices.    In particular, we obtain 
a matrix that represents the element (\ref{gencon-g}) 
of $\widehat{\GL}(\infty)$.   We shall use this matrix 
to characterize the associated solution of the Toda hierarchy 
in the Lax formalism.  

It is well known that neutral fermion bilinears are 
in one-to-one correspondence with 
$\ZZ\times\ZZ$ matrices \cite{MJD-book}: 
\beqnn
  \hat{X} \;\longleftrightarrow\;
  X = \sum_{m,n\in\ZZ}x_{mn}E_{mn}, \quad 
  E_{mn} = (\delta_{im}\delta_{jn})_{i,j\in\ZZ}. 
\eeqnn
Apart from c-number corrections, they have 
the same Lie algebraic structure, namely, 
\beqnn
  [\hat{X},\hat{Y}] = \widehat{[X,Y]} + c(X,Y), 
\eeqnn
where $c(X,Y)$ is a c-number cocycle of $\gl(\infty)$.   
Thus fermion bilinears form a central extension $\widehat{\gl}(\infty)$ 
of $\gl(\infty)$. 

Actually, the matrix representation gives us more freedom, 
because the set of infinite matrices is equipped with 
multiplication as well as Lie brackets.   For example, 
matrix representation of $L_0, M_0, J_k$ 
(let us use the same notations as fermion bilinears) 
can be written as 
\beq
  L_0 = \Delta,\quad 
  W_0 = \Delta^2,\quad 
  J_k = \Lambda^k, 
\label{LWJ-matrix}
\eeq
where 
\beqnn
  \Delta = \sum_{n\in\ZZ}nE_{nn},\quad 
  \Lambda = \sum_{n\in\ZZ}E_{n,n+1}. 
\eeqnn
Note that $\Lambda$ and $\Delta$ amount to 
the shift operator $e^{\rd_s}$ and 
the multiplication operator $s$ on the 1D lattice $\ZZ$: 
\beqnn
  \Lambda \;\longleftrightarrow\; e^{\rd_s},\quad 
  \Delta \;\longleftrightarrow\; s. 
\eeqnn

The matrix representation of fermion bilinears can be lifted up 
to $\widehat{\GL}(\infty)$  \cite{MJD-book}.   
By definition, the matrix representation $A = (a_{ij})$ 
of $g \in \widehat{\GL}(\infty)$ is determined by 
the Bogoliubov transformation on the linear span of $\psi_n$'s: 
\beqnn
   g\psi_ng^{-1} = \sum_{m\in\ZZ}\psi_ma_{mn}. 
\eeqnn
If $g$ is the exponential $e^{\hat{X}}$ 
of a neutral fermion bilinear $\hat{X}$, 
$A$ is the exponential $e^X$ of the matrix $X$.  
We can thereby find, from (\ref{LWJ-matrix}),  
an explicit form of the matrix representation 
of building blocks in (\ref{gencon-g}): 
\begin{itemize}
\item[(i)]
The matrix representation of $\Gamma_{\pm}(z)$ 
and $\Gamma'_{\pm}(z)$, denoted by the same notations, reads 
\beq
\begin{gathered}
  \Gamma_{\pm}(z) 
  = \exp\left(\sum_{k=1}^\infty\frac{z^k}{k}\Lambda^{\pm k}\right) 
  = (1 - z\Lambda^{\pm 1})^{-1},\\
  \Gamma'_{\pm}(z) 
  = \exp\left(- \sum_{k=1}^\infty\frac{(-z)^k}{k}\Lambda^{\pm k}\right) 
  = 1 + z\Lambda^{\pm}. 
\end{gathered}
\label{Gamma(z)-matrix}
\eeq
Consequently, $\Gamma_{\pm}(q^{-\rho})$ and $\Gamma'_{\pm}(q^{-\rho})$ 
become infinite products of matrices: 
\beq
\begin{aligned}
  \Gamma_{\pm}(q^{-\rho}) 
  &= \prod_{i=1}^\infty (1 - q^{i-1/2}\Lambda^{\pm 1})^{-1},\\
  \Gamma'_{\pm}(q^{-\rho}) 
  &= \prod_{i=1}^\infty (1 + q^{i-1/2}\Lambda^{\pm 1}). 
\end{aligned}
\label{Gamma(q)-matrix}
\eeq
These infinite products may be thought of as matrix-valued 
``quantum dilogarithm'' \cite{FV93,FK93}.   
\item[(ii)]
$q^{W_0/2}$ and $Q^{L_0}$ become the diagonal matrices 
\beq
  q^{\Delta^2/2} = \sum_{n\in\ZZ}q^{n^2/2}E_{nn},\quad 
  Q^\Delta = \sum_{n\in\ZZ}Q^nE_{nn}. 
\eeq
\end{itemize}
Thus the following matrix representation of (\ref{gencon-g}) is obtained: 
\begin{align}
  U &= q^{\Delta^2/2}\Gamma_{-}(q^{-\rho})\Gamma_{+}(q^{-\rho})Q_1^\Delta
         \Gamma'_{-}(q^{-\rho})\Gamma'_{+}(q^{-\rho})Q_2^\Delta\nonumber\\
    &\quad\mbox{}\times 
         \Gamma_{-}(q^{-\rho})\Gamma_{+}(q^{-\rho})Q_3^\Delta
         \Gamma'_{-}(q^{-\rho})\Gamma'_{+}(q^{-\rho})Q_4^\Delta\nonumber\\
    &\quad\mbox{}\times \cdots\nonumber\\
    &\quad\mbox{}\times 
         \Gamma_{-}(q^{-\rho})\Gamma_{+}(q^{-\rho})Q_{2N-1}^\Delta
         \Gamma'_{-}(q^{-\rho})\Gamma'_{+}(q^{-\rho})q^{-\Delta^2/2}. 
\label{gencon-U}
\end{align}

Let us make several technical remarks: 

\begin{remark}
$Q^\Delta$ and $\Lambda^k$'s satisfy the commutation relation 
\beq
  \Lambda^kQ^\Delta = Q^k Q^\Delta\Lambda^k. 
\label{Lambda-QD-relation}
\eeq
The order of $\Gamma_{\pm},\Gamma'_{\pm}$ and $Q^\Delta$'s 
in (\ref{gencon-U}) can be thereby exchanged as 
\beq
\begin{aligned}
  \Gamma_{\pm}(q^{-\rho})Q^\Delta 
    &= Q^\Delta\Gamma_{\pm}(Q^{\pm 1}q^{-\rho}),\\
  \Gamma'_{\pm}(q^{-\rho})Q^\Delta 
    &= Q^\Delta\Gamma'_{\pm}(Q^{\pm 1}q^{-\rho}). 
\end{aligned}
\label{Gamma-QD-relation}
\eeq
\end{remark}

\begin{remark}
(\ref{Gamma(q)-matrix}) can be slightly generalized as 
\beq
\begin{aligned}
  \Gamma_{\pm}(Qq^{-\rho}) 
    &= \prod_{i=1}^\infty (1 - Qq^{i-1/2}\Lambda^{\pm 1})^{-1},\\
  \Gamma'_{\pm}(Qq^{-\rho}) 
    &= \prod_{i=1}^\infty (1 + Qq^{i-1/2}\Lambda^{\pm 1}). 
\end{aligned}
\label{Gamma(Q,q)-matrix}
\eeq
\end{remark}

\begin{remark}
$q^{\Delta^2/2}$ transforms $\Lambda^k$ by conjugation as 
\beq
  q^{\Delta^2/2}\Lambda^kq^{-\Delta^2/2} = q^{-k\Delta-k^2/2}\Lambda^k. 
\label{qD2-Lambda-relation}
\eeq
This is a special case of ``the third shift symmetry'' 
in the terminology of our previous work \cite{Takasaki13}.  
$\Lambda^k$ and $q^{-k\Delta-k^2/2}\Lambda^k$ 
are elements of the quantum torus algebra spanned by 
\beq
  v^{(k)}_m = q^{km/2}q^{k\Delta}\Lambda^m
                   = q^{-km/2}\Lambda^mq^{k\Delta}, \quad 
  k, m \in \ZZ. 
\label{q-torus-algebra}
\eeq
\end{remark}

\section{Initial values of dressing operators}

Given the infinite matrix $U$ representing an element $g$ 
of $\widehat{\GL}(\infty)$ 
\footnote{To give a solution of the Toda hierarchy, 
$g$ has to satisfy a set of genericity conditions, 
e.g., $\langle s|g|s\rangle \not= 0$ for all $s\in\ZZ$ \cite{Takebe91}. }, 
the associated solution of the Toda hierarchy can be characterized 
by the factorization problem \cite{Takasaki84,NTT95,Takasaki95}
\beq
  \exp\left(\sum_{k=1}^\infty t_k\Lambda^k\right) 
  U\exp\left(- \sum_{k=1}^\infty\bar{t}_k\Lambda^{-k}\right) 
  = W^{-1}\bar{W}, 
\label{Uttbar-problem}
\eeq
where $W$ is a lower triangular matrix 
with all diagonal elements being equal to $1$, 
and $\bar{W}$ is an upper triangular matrix 
with all diagonal elements being non-zero.  

$W$ and $\bar{W}$ play the role of dressing operators.   
As matrix-valued functions of $\bst$ and $\bar{\bst}$, 
they satisfy the Sato equations 
\beq
  \begin{aligned}
  &\frac{\rd W}{\rd t_k} 
     = - \left(W\Lambda^kW^{-1}\right)_{<0}W, &
  &\frac{\rd W}{\rd\bar{t}_k} 
     = \left(\bar{W}\Lambda^{-k}\bar{W}^{-1}\right)_{<0}W,\\
  &\frac{\rd \bar{W}}{\rd t_k} 
     = \left(W\Lambda^kW^{-1}\right)_{\ge 0}\bar{W},&
  &\frac{\rd\bar{W}}{\rd\bar{t}_k} 
     = - \left(\bar{W}\Lambda^{-k}\bar{W}^{-1}\right)_{\ge 0}\bar{W},
  \end{aligned}
\eeq
where $(\quad)_{\ge 0}$ and $(\quad)_{<0}$ denote 
the projection to the upper and strictly lower triangular parts, i.e., 
\beqnn
  \left(\sum_{m,n}a_{mn}E_{mn}\right)_{\ge 0} = \sum_{m\le n}a_{mn}E_{mn},\; 
  \left(\sum_{m,n}a_{mn}E_{mn}\right)_{<0} = \sum_{m>n}a_{mn}E_{mn}. 
\eeqnn
The Lax operators are obtained by ``dressing'' 
the shift matrix $\Lambda$  as 
\beqnn
  L = W\Lambda W^{-1}, \quad 
  \bar{L} = \bar{W}\Lambda\bar{W},  
\eeqnn
and satisfy the Lax equations 
\beq
  \begin{aligned}
  &\frac{\rd L}{\rd t_k} = [(L^k)_{\ge 0},\,L], & 
  &\frac{\rd L}{\rd\bar{t}_k} = [(\bar{L}^{-k})_{<0},\,L]\\
  &\frac{\rd\bar{L}}{\rd t_k} = [(L^k)_{\ge 0},\,\bar{L}],&
  &\frac{\rd\bar{L}}{\rd\bar{t}_k} = [(\bar{L}^{-k})_{<0},\,\bar{L}]. 
  \end{aligned}
\label{Toda-Lax-eq}
\eeq
Note that the inverse matrix $\bar{L}^{-1}$, 
too, satisfies Lax equations of the same form 
\beqnn
  \frac{\rd\bar{L}^{-1}}{\rd t_k} = [(L^k)_{\ge 0},\,\bar{L}^{-1}],\quad
  \frac{\rd\bar{L}^{-1}}{\rd\bar{t}_k} = [(\bar{L}^{-k})_{<0},\,\bar{L}^{-1}]. 
\eeqnn
Actually, it is $\bar{L}^{-1}$ rather than $\bar{L}$ 
that plays a fundamental role in our consideration 
on the solution obtained from (\ref{gencon-U}).  

In a general case, solving the factorization problem (\ref{Uttbar-problem}) 
directly is extremely difficult.  In the case of (\ref{gencon-U}), 
however,  we can find an explicit form of the solution 
at the initial time $\bst = \bar{\bst} = \bszero$ as follows. 

\begin{proposition}
The matrix (\ref{gencon-U}) can be factorized as 
\beq
  U = W_{(0)}^{-1}\bar{W}_{(0)},
\eeq
where 
\begin{align}
  W_{(0)} &= q^{\Delta^2/2}\cdot\prod_{n=1}^N
            \Gamma_{-}(Q^{(2n-1)}q^{-\rho})^{-1}
            \Gamma'_{-}(Q^{(2n)}q^{-\rho})^{-1}
          \cdot q^{-\Delta^2/2},
  \label{gencon-W}\\
  \bar{W}_{(0)} 
   &= q^{\Delta^2/2}\cdot\prod_{n=1}^N 
           \Gamma_{+}(Q^{(2n-1)-1}q^{-\rho})
           \Gamma'_{+}(Q^{(2n)-1}q^{-\rho}) \nonumber\\
   &\quad\mbox{}\times
      (Q_1\cdots Q_{2N-1})^\Delta q^{-\Delta^2/2}, 
   \label{gencon-Wbar}
\end{align}
and 
\beqnn
  Q^{(n)} = Q_1\cdots Q_{n-1},\quad Q^{(1)} = 1. 
\eeqnn
The factors $W_{(0)},\bar{W}_{(0)}$ are the initial values 
$W|_{\bst=\bar{\bst}=\bszero},\,\bar{W}|_{\bst=\bar{\bst}=\bszero}$ 
of the dressing matrices for the solution of the Toda hierarchy 
obtained from (\ref{gencon-U}). 
\end{proposition}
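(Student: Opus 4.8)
The plan is to reduce $U$ to its Gauss (LU) form by sorting its factors, exploiting the fact that the matrix representation is far more commutative than its fermionic origin. The crucial observation is that every factor $\Gamma_{\pm}(q^{-\rho})$ and $\Gamma'_{\pm}(q^{-\rho})$ in (\ref{gencon-U}) is, by (\ref{Gamma(z)-matrix}), a power series in $\Lambda$ alone; hence all of them commute with one another. Moreover, they are triangular with unit diagonal: $\Gamma_{-}(z)=(1-z\Lambda^{-1})^{-1}$ and $\Gamma'_{-}(z)=1+z\Lambda^{-1}$ are lower unitriangular, while $\Gamma_{+}(z)=(1-z\Lambda)^{-1}$ and $\Gamma'_{+}(z)=1+z\Lambda$ are upper unitriangular. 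The only factors obstructing free reordering are the diagonal matrices $Q_j^{\Delta}$ and the conjugating factors $q^{\pm\Delta^2/2}$, and their interaction with the $\Gamma$'s is governed entirely by the explicit commutation rule (\ref{Gamma-QD-relation}). So I would first strip off the outer conjugation $q^{\Delta^2/2}(\cdots)q^{-\Delta^2/2}$ and work with the inner product $P$ of $\Gamma$'s and $Q_j^{\Delta}$'s.

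Second, I would sort $P$ into the form $P_{-}\cdot Q_{\mathrm{tot}}^{\Delta}\cdot P_{+}$, where $Q_{\mathrm{tot}}=Q_1\cdots Q_{2N-1}$. Label the $2N$ groups of $\Gamma$-factors so that the $n$-th group is preceded by the diagonals $Q_1^{\Delta}\cdots Q_{n-1}^{\Delta}$, whose product is exactly $Q^{(n)}=Q_1\cdots Q_{n-1}$. Moving the lower factor of the $n$-th group leftward past those diagonals rescales its argument by $Q^{(n)}$ (the left-moving form of (\ref{Gamma-QD-relation})), producing $\Gamma_{-}(Q^{(2n-1)}q^{-\rho})$ for odd groups and $\Gamma'_{-}(Q^{(2n)}q^{-\rho})$ for even groups; since these all commute, their product is $P_{-}$. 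Symmetrically, pushing every upper factor to the right past the interspersed diagonals collects the latter into $Q_{\mathrm{tot}}^{\Delta}$; finally sliding $Q_{\mathrm{tot}}^{\Delta}$ to the far right rescales the remaining $+$ arguments by $Q_{\mathrm{tot}}^{-1}$, turning $Q_{2n-1}\cdots Q_{2N-1}=Q_{\mathrm{tot}}/Q^{(2n-1)}$ into $(Q^{(2n-1)})^{-1}$, i.e. the arguments $Q^{(2n-1)-1}q^{-\rho}$ and $Q^{(2n)-1}q^{-\rho}$ appearing in (\ref{gencon-Wbar}). Reinstating the conjugation by $q^{\Delta^2/2}$ then identifies $q^{\Delta^2/2}P_{-}q^{-\Delta^2/2}$ with $W_{(0)}^{-1}$ in (\ref{gencon-W}) and the remainder with $\bar{W}_{(0)}$ in (\ref{gencon-Wbar}).

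Third, I would check the normalization required by (\ref{Uttbar-problem}): $W_{(0)}$ is a product of lower-unitriangular matrices conjugated by a diagonal, hence lower triangular with all diagonal entries $1$ (the diagonal conjugation only rescales off-diagonal entries, by (\ref{qD2-Lambda-relation})); likewise $\bar{W}_{(0)}$ is upper triangular with diagonal entries $Q_{\mathrm{tot}}^{n}\neq 0$. Since the factorization problem (\ref{Uttbar-problem}) at $\bst=\bar{\bst}=\bszero$ asks precisely for a decomposition $U=W^{-1}\bar{W}$ with this triangular normalization, and such a Gauss factorization is unique whenever it exists, the exhibited factors must coincide with the initial values $W|_{\bst=\bar{\bst}=\bszero}$ and $\bar{W}|_{\bst=\bar{\bst}=\bszero}$ of the dressing matrices.

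I expect the only real obstacles to be bookkeeping and analytic hygiene rather than anything conceptual. The delicate part is keeping the parity-dependent indexing straight — which groups carry $\Gamma$ versus $\Gamma'$, and that the accumulated diagonal weight on the left is $Q^{(n)}$ while the residual weight on the right becomes $(Q^{(n)})^{-1}$. On the analytic side, one must justify that the infinite products (\ref{Gamma(q)-matrix}) converge as $\ZZ\times\ZZ$ matrices (guaranteed by $0<q<1$), so that the rearrangements are legitimate, and invoke the genericity of $g$ to secure existence and uniqueness of the factorization (\ref{Uttbar-problem}); these are where I would spend the care, the algebra of sorting being otherwise routine.
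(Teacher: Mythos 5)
Your proposal is correct and follows essentially the same route as the paper: use the commutation rule (\ref{Gamma-QD-relation}) to sweep the diagonal factors $Q_j^{\Delta}$ to the right (rescaling the $\Gamma$-arguments by $Q^{(n)}$ or $Q^{(n)-1}$ according to position), use the mutual commutativity of the matrix-valued vertex operators to sort all lower-unitriangular factors to the left of the upper-triangular ones, and then identify the result with the unique Gauss factorization $U=W_{(0)}^{-1}\bar{W}_{(0)}$ demanded by (\ref{Uttbar-problem}) at $\bst=\bar{\bst}=\bszero$. The only cosmetic difference is that you collect the diagonals into $Q_{\mathrm{tot}}^{\Delta}$ in two stages and you make explicit the uniqueness of the triangular factorization, which the paper leaves implicit.
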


\begin{proof}
Use the commutation relation (\ref{Gamma-QD-relation}) 
repeatedly to move $Q_1^\Delta$, \ldots, $Q_{2N-1}^\Delta$ 
in (\ref{gencon-U}) to the front of $q^{-\Delta^2/2}$ as 
\begin{align*}
  U &= q^{\Delta^2/2}\Gamma_{-}(q^{-\rho})\Gamma_{+}(q^{-\rho})
       \Gamma'_{-}(Q_1q^{-\rho}))\Gamma'_{+}(Q_1^{-1}q^{-\rho})\nonumber\\
    &\quad\mbox{}\times
       \Gamma_{-}(Q^{(3)}q^{-\rho})\Gamma_{+}(Q^{(3)-1}q^{-\rho})
       \Gamma'_{-}(Q^{(4)}q^{-\rho})\Gamma'_{+}(Q^{(4)-1}q^{-\rho})\nonumber\\
    &\quad\mbox{}\times\cdots\nonumber\\
    &\quad\mbox{}\times
       \Gamma_{-}(Q^{(2N-1)}q^{-\rho})\Gamma_{+}(Q^{(2N-1)-1}q^{-\rho})
       \Gamma'_{-}(Q^{(2N)}q^{-\rho})\Gamma'_{+}(Q^{(2N)-1}q^{-\rho})\nonumber\\
    &\quad\mbox{}\times (Q_1\cdots Q_{2N-1})^{\Delta}q^{-\Delta^2/2}. 
\end{align*}
Since the vertex operators in the matrix representation 
(\ref{Gamma(z)-matrix}) commute with each other, 
$\Gamma_{-}$'s and $\Gamma'_{-}$'s can be moved 
to the left side of $\Gamma_{+}$'s and $\Gamma'_{+}$'s.   
Thus $U$ can be converted to the factorized form 
with factors (\ref{gencon-W}) and (\ref{gencon-Wbar}).  
Since (\ref{gencon-W}) is a lower triangular matrix 
with unit diagonal elements 
and (\ref{gencon-Wbar}) is an upper triangular matrix 
with non-zero diagonal elements, these matrices are indeed 
the initial values of the dressing operators. 
\end{proof}

\section{Initial values of Lax operators}

Let us calculate the initial values 
\beq
  L_{(0)} = W_{(0)}\Lambda W_{(0)}^{-1},\quad 
  \bar{L}_{(0)}^{-1} = \bar{W}_{(0)}\Lambda^{-1}\bar{W}_{(0)}^{-1}
\label{LLbar-5.1}
\eeq
of the Lax operators at $\bst = \bar{\bst} = \bszero$ 
from the explicit expressions 
(\ref{gencon-W}) and (\ref{gencon-Wbar})  
of the initial values of the dressing operators.  
As it turns out, they are non-commutative rational functions 
of $q^\Delta$ and $\Lambda$.  

Plugging the explicit form of $W_{(0)}$ and $\bar{W}_{(0)}$ 
into (\ref{LLbar-5.1}) yields the following 
somewhat complicated expressions of $L_{(0)}$ and $\bar{L}_{(0)}^{-1}$: 
\begin{align*}
  L_{(0)} &= q^{\Delta^2/2}\prod_{n=1}^N
       \Gamma_{-}(Q^{(2n-1)}q^{-\rho})^{-1}
       \Gamma'_{-}(Q^{(2n)}q^{-\rho})^{-1}\nonumber\\
    &\quad\mbox{}\times q^{-\Delta^2/2}\Lambda q^{\Delta^2/2}\nonumber\\
    &\quad\mbox{}\times\prod_{n=1}^N
       \Gamma'_{-}(Q^{(2n)}q^{-\rho})\Gamma_{-}(Q^{(2n-1)}q^{-\rho})
       \cdot q^{-\Delta^2/2}, 
\end{align*}
\begin{align*}
  \bar{L}_{(0)}^{-1}  & = q^{\Delta^2/2}\prod_{n=1}^N 
       \Gamma_{+}(Q^{(2n-1)-1}q^{-\rho})
       \Gamma'_{+}(Q^{(2n)-1}q^{-\rho}) \nonumber\\
    &\quad\mbox{}\times(Q_1\cdots Q_{2N-1})^\Delta 
      q^{-\Delta^2/2}\Lambda^{-1}
      q^{\Delta^2/2}(Q_1\cdots Q_{2N-1})^{-\Delta}\nonumber\\
    &\quad\mbox{}\times\prod_{n=1}^N
      \Gamma'_{+}(Q^{(2n)-1}q^{-\rho}) ^{-1}
      \Gamma_{+}(Q^{(2n-1)-1}q^{-\rho})^{-1}
      \cdot q^{-\Delta^2/2}. 
\end{align*}
We can use (\ref{qD2-Lambda-relation}) and 
(\ref{Lambda-QD-relation}) to rewrite 
the products in the middle of the right hand side as 
\beqnn
\begin{gathered}
   q^{-\Delta^2/2}\Lambda q^{-\Delta^2/2} = q^{\Delta+1/2}\Lambda,\\
   (Q_1\cdots Q_{2N-1})^\Delta q^{-\Delta^2/2}\Lambda^{-1}
     q^{\Delta^2/2}(Q_1\cdots Q_{2N-1})^{-\Delta}    
   = Q_1\cdots Q_{2N-1} q^{-\Delta+1/2}\Lambda^{-1}. 
\end{gathered}
\eeqnn
Since the last factors $\Lambda^{\pm 1}$ in the right hand side 
commute with $\Gamma$'s and $\Gamma'$'s,  
let us move them to the front of $q^{-\Delta^2/2}$ as 
\begin{align}
  L_{(0)} &= q^{\Delta^2/2}\prod_{n=1}^N
        \Gamma_{-}(Q^{(2n-1)}q^{-\rho})^{-1}
        \Gamma'_{-}(Q^{(2n)}q^{-\rho})^{-1}
       \cdot q^{\Delta+1/2}\nonumber\\
    &\quad\mbox{}\times\prod_{n=1}^N
        \Gamma'_{-}(Q^{(2n)}q^{-\rho})
        \Gamma_{-}(Q^{(2n-1)}q^{-\rho})
       \cdot\Lambda q^{-\Delta^2/2} 
\label{L-5.2}
\end{align}
and 
\begin{align}
  \bar{L}_{(0)}^{-1} 
    & = Q_1\cdots Q_{2N-1}q^{\Delta^2/2}
        \prod_{n=1}^N 
         \Gamma_{+}(Q^{(2n-1)-1}q^{-\rho})
         \Gamma'_{+}(Q^{(2n)-1}q^{-\rho}) 
        \cdot q^{-\Delta+1/2}\nonumber\\
    &\quad\mbox{}\times \prod_{n=1}^N
        \Gamma'_{+}(Q^{(2n)-1}q^{-\rho}) ^{-1}
        \Gamma_{+}(Q^{(2n-1)-1}q^{-\rho})^{-1}
       \cdot\Lambda^{-1} q^{-\Delta^2/2}. 
\label{Lbar-5.2}
\end{align}

To simplify these expressions, we use the following identities 
that underlie the ``shift symmetries'' of the quantum torus algebra 
\cite{NT07,NT08,Takasaki13}.  

\begin{lemma}
\beq
\begin{aligned}
  \Gamma_{-}(Qq^{-\rho})^{-1}q^\Delta\Gamma_{-}(Qq^{-\rho})
    &= q^\Delta(1 - Qq^{-1/2}\Lambda^{-1}),\\
  \Gamma'_{-}(Qq^{-\rho})^{-1}q^\Delta\Gamma'_{-}(Qq^{-\rho}) 
    &= q^\Delta(1 + Qq^{-1/2}\Lambda^{-1})^{-1},\\
  \Gamma_{+}(Q^{-1}q^{-\rho})q^{-\Delta}\Gamma_{+}(Q^{-1}q^{-\rho})^{-1} 
    &= q^{-\Delta}(1 - Q^{-1}q^{-1/2}\Lambda)^{-1},\\
  \Gamma'_{+}(Q^{-1}q^{-\rho})q^{-\Delta}\Gamma'_{+}(Q^{-1}q^{-\rho})^{-1}
    &= q^{-\Delta}(1 + Q^{-1}q^{-1/2}\Lambda). 
\end{aligned}
\label{Gamma-qD-relation}
\eeq
\end{lemma}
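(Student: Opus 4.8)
The plan is to reduce all four identities to a single elementary mechanism. The only structural input needed is the commutation relation (\ref{Lambda-QD-relation}) specialized to $Q=q$, $k=1$, namely $\Lambda q^\Delta = q\,q^\Delta\Lambda$, which I rewrite as the conjugation rules $q^{-\Delta}\Lambda^{\pm 1}q^\Delta = q^{\pm 1}\Lambda^{\pm 1}$ and $q^{\Delta}\Lambda^{\pm 1}q^{-\Delta} = q^{\mp 1}\Lambda^{\pm 1}$. Since every factor of the matrices $\Gamma_{\pm},\Gamma'_{\pm}$ in (\ref{Gamma(z)-matrix}) and (\ref{Gamma(Q,q)-matrix}) is a power series in a single variable $\Lambda^{\pm 1}$, conjugation by $q^{\pm\Delta}$ acts on any such operator $f(\Lambda^{\pm 1})$ simply by rescaling the argument: $q^{-\Delta}f(\Lambda^{-1})q^\Delta = f(q^{-1}\Lambda^{-1})$ and $q^{\Delta}f(\Lambda)q^{-\Delta} = f(q^{-1}\Lambda)$.

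For the first identity I would write $\Gamma_{-}(Qq^{-\rho})^{-1}q^\Delta\Gamma_{-}(Qq^{-\rho}) = q^\Delta\bigl(q^{-\Delta}\Gamma_{-}(Qq^{-\rho})^{-1}q^\Delta\bigr)\Gamma_{-}(Qq^{-\rho})$ and apply the rescaling rule to the middle factor. By (\ref{Gamma(Q,q)-matrix}), $\Gamma_{-}(Qq^{-\rho})^{-1} = \prod_{i=1}^\infty(1 - Qq^{i-1/2}\Lambda^{-1})$, so the rescaling $\Lambda^{-1}\mapsto q^{-1}\Lambda^{-1}$ sends it to $\prod_{i=1}^\infty(1 - Qq^{i-3/2}\Lambda^{-1}) = \prod_{j=0}^\infty(1 - Qq^{j-1/2}\Lambda^{-1})$. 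Because all the factors commute, this reindexed product is just the original product with one extra boundary factor $(1 - Qq^{-1/2}\Lambda^{-1})$ prepended; the original product then cancels against the rightmost $\Gamma_{-}(Qq^{-\rho})$, leaving $q^\Delta(1 - Qq^{-1/2}\Lambda^{-1})$, as claimed.

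The remaining three identities follow from exactly the same telescoping, with only bookkeeping changes. The $\Gamma'_{-}$ case uses the product $\prod_{i}(1 + Qq^{i-1/2}\Lambda^{-1})$ inverted, so the surviving boundary factor appears as $(1 + Qq^{-1/2}\Lambda^{-1})^{-1}$; the two $\Gamma_{+},\Gamma'_{+}$ cases conjugate $q^{-\Delta}$ instead, which via $q^{\Delta}\Lambda q^{-\Delta} = q^{-1}\Lambda$ rescales $\Lambda\mapsto q^{-1}\Lambda$ and reindexes the product in $\Lambda$ rather than $\Lambda^{-1}$, producing the boundary factors $(1 - Q^{-1}q^{-1/2}\Lambda)^{-1}$ and $(1 + Q^{-1}q^{-1/2}\Lambda)$ respectively. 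In each case the single leftover factor is exactly the right-hand side.

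I expect the only genuine subtlety to be the legitimacy of manipulating the infinite products — that they converge and that the reindexing and cancellation are valid. This is controlled by the standing assumption $0 < q < 1$, which forces $q^{i-1/2}\to 0$ and makes each product a well-defined invertible operator (the matrix-valued quantum dilogarithm of (\ref{Gamma(q)-matrix})); the commuting of all factors within a single product is what makes the ``prepend one boundary factor'' step rigorous rather than merely formal.
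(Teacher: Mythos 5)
Your proof is correct and follows essentially the same route as the paper: both use the quantum torus relation $\Lambda^k q^\Delta = q^k q^\Delta\Lambda^k$ to commute $q^{\pm\Delta}$ past the vertex operator, which rescales the argument $q^{-\rho}\mapsto q^{-1}q^{-\rho}$, and then the two infinite products telescope to a single boundary factor. The convergence remark for $0<q<1$ is a sensible addition but does not change the argument.
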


\begin{proof}
Use the commutation relations 
\beq
  \Lambda^kq^\Delta = q^kq^\Delta\Lambda^k 
\label{Lambda-qD-relation}
\eeq
(which is part of the structure of the quantum torus algebra) 
to exchange the order of $\Gamma_{-}(Qq^{-\rho})$ and $q^\Delta$ as 
\beqnn
  \Gamma_{-}(Qq^{-\rho})^{-1}q^\Delta\Gamma_{-}(Qq^{-\rho})
  = q^\Delta\Gamma_{-}(Qq^{-1}q^{-\rho})^{-1}\Gamma_{-}(Qq^{-\rho}), 
\eeqnn
and plug the infinite product expression 
(\ref{Gamma(Q,q)-matrix}) herein.  The outcome is 
\beqnn
  q^\Delta\prod_{i=1}^\infty(1 - Qq^{i-3/2}\Lambda^{-1})
  \cdot\prod_{i=1}^\infty(1 - Qq^{i-1/2}\Lambda^{-1})^{-1} 
  = q^\Delta(1 - Qq^{-1/2}\Lambda^{-1}). 
\eeqnn
Thus the first identity of (\ref{Gamma-qD-relation}) is obtained.  
The other identities can be derived in the same way.  
\end{proof}

The foregoing expressions (\ref{L-5.2})--(\ref{Lbar-5.2}) 
of $L_{(0)}$ and $\bar{L}_{(0)}^{-1}$ can be simplified 
by the identities (\ref{Gamma-qD-relation}) as 
\begin{align}
  L_{(0)} &= q^{\Delta^2/2}q^{\Delta+1/2} 
      \prod_{n=1}^N
       (1 - Q^{(2n-1)}q^{-1/2}\Lambda^{-1})
       (1 + Q^{(2n)}q^{-1/2}\Lambda^{-1})^{-1}
      \cdot\Lambda q^{-\Delta^2/2}
\label{L-5.3}
\end{align}
and 
\begin{align}
  \bar{L}_{(0)}^{-1} 
  & = Q_1\cdots Q_{2N-1}q^{\Delta^2/2}q^{-\Delta+1/2}\nonumber\\
  &\quad\mbox{}\times
      \prod_{n=1}^N 
        (1 - Q^{(2n-1)-1}q^{-1/2}\Lambda)^{-1}
        (1 + Q^{(2n)-1}q^{-1/2}\Lambda) 
      \cdot\Lambda^{-1} q^{-\Delta^2/2}. 
\label{Lbar-5.3}
\end{align}

Lastly, we use (\ref{qD2-Lambda-relation}) and 
(\ref{Lambda-qD-relation}) to convert (\ref{L-5.3}) and (\ref{Lbar-5.3}) 
to rational functions of $q^\Delta$ and $\Lambda$ as 
\begin{align}
  L_{(0)} &= q^{\Delta+1/2}\prod_{n=1}^N
         (1 - Q^{(2n-1)}q^{\Delta-1}\Lambda^{-1})
         (1 + Q^{(2n)}q^{\Delta-1}\Lambda^{-1})^{-1}
       \cdot q^{-\Delta-1/2}\Lambda \nonumber\\
  &= \prod_{n=1}^N
         (1 - Q^{(2n-1)}q^\Delta\Lambda^{-1})
         (1 + Q^{(2n)}q^\Delta\Lambda^{-1})^{-1}
       \cdot\Lambda 
  \label{L-5.4}
\end{align}
and 
\begin{align}
  \bar{L}_{(0)}^{-1} 
  &= Q_1\cdots Q_{2N-1}q^{-\Delta+1/2}\nonumber\\
  &\quad\mbox{}\times 
     \prod_{n=1}^N 
       (1 - Q^{(2n-1)-1}q^{-\Delta-1}\Lambda)^{-1}
       (1 + Q^{(2n)-1}q^{-\Delta-1}\Lambda) 
     \cdot q^{\Delta-1/2}\Lambda^{-1}\nonumber\\
  &= Q_1\cdots Q_{2N-1}
     \prod_{n=1}^N 
       (1 - Q^{(2n-1)-1}q^{-\Delta}\Lambda)^{-1}
       (1 + Q^{(2n)-1}q^{-\Delta}\Lambda) 
     \cdot\Lambda^{-1}. 
  \label{Lbar-5.4}
\end{align}

Although this is an almost final form, it will be suggestive 
to rewrite these expressions to the following ``symmetric'' form 
with the aid of (\ref{Lambda-qD-relation}).  

\begin{proposition}
The initial values (\ref{LLbar-5.1}) of the Lax operators 
have the factorized form 
\begin{align}
  L_{(0)} &= \prod_{n=1}^N(1 - Q^{(2n-1)}q^\Delta\Lambda^{-1})
        \cdot\Lambda\cdot
        \prod_{n=1}(1 + Q^{(2n)}\Lambda^{-1}q^\Delta)^{-1},
  \label{L-5.5}\\
  \bar{L}_{(0)}^{-1} 
    &= Q_1\cdots Q_{2N-1} 
        \prod_{n=1}^N(1 + Q^{(2n)-1}q^{-\Delta}\Lambda)
        \cdot\Lambda^{-1}\cdot 
        \prod_{n=1}^N(1 - Q^{(2n-1)-1}\Lambda q^{-\Delta})^{-1}. 
  \label{Lbar-5.5}
\end{align}
\end{proposition}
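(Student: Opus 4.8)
The plan is to obtain the symmetric forms (\ref{L-5.5}) and (\ref{Lbar-5.5}) directly from the expressions (\ref{L-5.4}) and (\ref{Lbar-5.4}) that are already in hand, so that the proof reduces to two elementary rearrangements. The first observation I would record is that every factor in the product of (\ref{L-5.4}) is a rational function of the \emph{single} operator $q^\Delta\Lambda^{-1}$, and likewise every factor in (\ref{Lbar-5.4}) is a rational function of the single operator $q^{-\Delta}\Lambda$. Hence, within each expression, all the factors commute among themselves, and the interleaved products may be split into two homogeneous blocks: for $L_{(0)}$ the $(1-Q^{(2n-1)}q^\Delta\Lambda^{-1})$ collect to the left of the $(1+Q^{(2n)}q^\Delta\Lambda^{-1})^{-1}$, and for $\bar L_{(0)}^{-1}$ the $(1+Q^{(2n)-1}q^{-\Delta}\Lambda)$ collect to the left of the $(1-Q^{(2n-1)-1}q^{-\Delta}\Lambda)^{-1}$.

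The second step carries the trailing $\Lambda^{\pm 1}$ into the gap between the two blocks by a single conjugation. For $L_{(0)}$, conjugating one factor by $\Lambda^{-1}$ and using (\ref{Lambda-qD-relation}) gives
\[
  \Lambda^{-1}(1 + Qq^\Delta\Lambda^{-1})\Lambda = 1 + Q\Lambda^{-1}q^\Delta,
\]
so that $(1 + Qq^\Delta\Lambda^{-1})^{-1}\Lambda = \Lambda(1 + Q\Lambda^{-1}q^\Delta)^{-1}$; applied to each factor of the second block this moves $\Lambda$ to its front while exchanging the order of $q^\Delta$ and $\Lambda^{-1}$, which is precisely (\ref{L-5.5}). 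For $\bar L_{(0)}^{-1}$, I would instead conjugate by $\Lambda$, using
\[
  \Lambda(1 - Qq^{-\Delta}\Lambda)\Lambda^{-1} = 1 - Q\Lambda q^{-\Delta}
\]
to get $(1 - Qq^{-\Delta}\Lambda)^{-1}\Lambda^{-1} = \Lambda^{-1}(1 - Q\Lambda q^{-\Delta})^{-1}$, which pulls $\Lambda^{-1}$ past the block of $(1-\cdots)^{-1}$ factors and yields (\ref{Lbar-5.5}).

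Because both steps are assembled from the single commutation relation (\ref{Lambda-qD-relation}) together with the commutativity of functions of one operator, I do not expect any substantive obstacle: the proposition is really a cosmetic rewriting of (\ref{L-5.4})--(\ref{Lbar-5.4}) into a form that exposes the factorized structure (\ref{LLbar-BCop}). The only point that demands care is the bookkeeping of the rearrangement, namely verifying that the free reordering of the interleaved product preserves the intended pairing of the indices $Q^{(2n-1)}$ with $Q^{(2n)}$, and that each conjugation identity is applied with the correct sign of the $q^\Delta$ exponent so that the $(1 \mp \cdots)^{\pm 1}$ factors land on the intended side of $\Lambda^{\pm 1}$.
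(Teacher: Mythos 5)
Your proposal is correct and matches the paper's own route: the paper likewise obtains (\ref{L-5.5})--(\ref{Lbar-5.5}) from (\ref{L-5.4})--(\ref{Lbar-5.4}) by the observation that all factors are functions of the single operator $q^{\Delta}\Lambda^{-1}$ (resp. $q^{-\Delta}\Lambda$), hence commute, followed by sliding $\Lambda^{\pm 1}$ through one block via the commutation relation (\ref{Lambda-qD-relation}). The only difference is that the paper leaves this rearrangement as a one-line remark, whereas you have written out the conjugation identities explicitly; both are the same argument.
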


To make contact with a generalization of the Ablowitz-Ladik hierarchy, 
let us restate this result in a slightly different form.  
The fully factorized form (\ref{L-5.5}) of $L_{(0)}$ 
can be reassembled as 
\beq
  L_{(0)} = B_{(0)}\Lambda^{1-N}C_{(0)}^{-1}, 
\label{L-5.6}
\eeq
where 
\beq
\begin{aligned}
  B_{(0)} 
    &= \prod_{n=1}^N(1 - Q^{(2n-1)}q^\Delta\Lambda^{-1})\cdot\Lambda^N,\\
  C_{(0)} 
    &= \prod_{n=1}^N(1 + Q^{(2n)}\Lambda^{-1}q^\Delta). 
\end{aligned}
\label{gencon-BC}
\eeq
Moreover, since 
\beqnn
\begin{gathered}
  \prod_{n=1}^N(1 + Q^{(2n)-1}q^{-\Delta}\Lambda) 
    = Q^{(2)-1}Q^{(4)-1}\cdots Q^{(2N)-1}C(q^{-\Delta}\Lambda)^N,\\
  \prod_{n=1}^N(1 - Q^{(2n-1)-1}\Lambda q^{-\Delta})^{-1}
    = Q^{(1)}Q^{(3)}\cdots Q^{(2N-1)}(-\Lambda q^{-\Delta})^{-N}\Lambda^NB^{-1}
\end{gathered}
\eeqnn
and 
\beqnn
  (q^{-\Delta}\Lambda)^N\Lambda^{-1}(-\Lambda q^{-\Delta})^{-N}\Lambda^N
  = (-1)^N\Lambda^{N-1}, 
\eeqnn
the fully factorized form (\ref{Lbar-5.5}) of $\bar{L}_{(0)}^{-1}$,  
too, can be cast into the similar expression 
\beq
  \bar{L}_{(0)}^{-1} = DC_{(0)}\Lambda^{N-1}B_{(0)}^{-1}, 
\label{Lbar-5.6}
\eeq
where 
\beq
  D = (-1)^NQ_1\cdots Q_{2N-1}
         \frac{Q^{(1)}Q^{(3)}\cdots Q^{(2N-1)}}{Q^{(2)}Q^{(4)}\cdots Q^{(2N)}}
     = (-1)^NQ_2Q_4\cdots Q_{2N-2}. 
\label{gencon-D}
\eeq

$B_{(0)}$ and $C_{(0)}$  are polynomials in $\Lambda$ and $\Lambda^{-1}$, 
respectively, of the form 
\beqnn
\begin{aligned}
  B_{(0)} &= \Lambda^N + b_1\Lambda^{N-1} + \cdots + b_N, \\
  C_{(0)} &= 1 + c_1\Lambda^{-1} + \cdots + c_N\Lambda^{-N}, 
\end{aligned}
\eeqnn
where $b_1,\ldots,b_N$ and $c_1,\ldots,c_N$ 
are diagonal matrices. These matrices correspond 
to finite order difference operators of the form 
(\ref{general-BCop}). As we show in the next section, 
the partially factorized forms (\ref{L-5.6}) and (\ref{Lbar-5.6}) 
---  equivalently (\ref{LLbar-BCop}) as difference operators  --- 
of $L$ and $\bar{L}^{-1}$ yield a reduction of the Toda hierarchy 
to the generalized Ablowitz-Ladik hierarchy.   In this sense, 
(\ref{gencon-BC}) are initial conditions for a special solution 
of the generalized Ablowitz-Ladik hierarchy.  

\begin{remark}
The inverse matrices $B_{(0)}^{-1}$ and $C_{(0)}^{-1}$ are understood 
to be power series of $\Lambda$ and $\Lambda^{-1}$, respectively: 
\beqnn
\begin{aligned}
  B_{(0)}^{-1} &= b_N^{-1} - b_N^{-1}b_{N-1}\Lambda b_N^{-1} + \cdots,\\
  C_{(0)}^{-1} &= 1 - c_1\Lambda^{-1} + \cdots. 
\end{aligned}
\eeqnn
Since $B_{(0)}^{-1}$ and $C_{(0)}^{-1}$ are thus interpreted, 
the inverse of the right hand side of (\ref{Lbar-5.6}) 
is NOT equal to $D^{-1}B_{(0)}\Lambda^{1-N}C_{(0)}^{-1} = D^{-1}L_{(0)}$. 
Thus (\ref{L-5.6}) and (\ref{Lbar-5.6}) do not imply 
the trivial situation where $\bar{L}_{(0)}$ is proportional 
to $L_{(0)}$.  
\end{remark}

\section{Generalized Ablowitz-Ladik hierarchy}

The generalized Ablowitz-Ladik hierarchy is defined 
by the modified Lax equations 
\beq
\begin{aligned}
  \frac{\rd B}{\rd t_k}  &= P_kB - BQ_k,&
  \frac{\rd B}{\rd\bar{t}_k} &= \bar{P}_kB - B\bar{Q}_k,\\
  \frac{\rd C}{\rd t_k} &= P_kC - CR_k,&
  \frac{\rd C}{\rd\bar{t}_k} &= \bar{P}_kC - C\bar{R}_k 
\end{aligned}
\label{BC-eq}
\eeq
for general matrices $B,C$ of the form 
\beq
\begin{aligned}
  B &= \Lambda^N + b_1\Lambda^{N-1} + \cdots + b_N, \\
  C &= 1 + c_1\Lambda^{-1} + \cdots + c_N\Lambda^{-N}, 
\end{aligned}
\label{general-BC}
\eeq
where $b_1,\ldots,b_N$ and $c_1,\ldots,c_N$ are diagonal matrices.  
$P_k,Q_k,R_k$ and $\bar{P}_k,\bar{Q}_k,\bar{R}_k$ are defined as 
\beqnn
\begin{aligned}
  P_k &= \left((B\Lambda^{1-N}C^{-1})^k\right)_{\ge 0} ,&
  \bar{P}_k &= \left((DC\Lambda^{N-1}B^{-1})^k\right)_{<0},\\
  Q_k &= \left((\Lambda^{1-N}C^{-1}B)^k\right)_{\ge 0},&
  \bar{Q}_k &= \left((DB^{-1}C\Lambda^{N-1})^k\right)_{<0},\\ 
  R_k &= \left((C^{-1}B\Lambda^{1-N})^k\right)_{\ge 0},&
  \bar{R}_k &= \left((D\Lambda^{N-1}B^{-1}C)^k\right)_{<0},
\end{aligned}
\eeqnn
and $D$ is an arbitrary constant.   The case of $N =1$ 
amounts to the ordinary Ablowitz-Ladik hierarchy \cite{BCR11}. 

\begin{remark}
$D$ may be absorbed into rescaling of the time variables 
$\bar{t}_k \to D^{-k}\bar{t}_k$.  
\end{remark}

\begin{remark}
$Q_k,R_k,\bar{Q}_k,\bar{R}_k$ satisfy the algebraic relations 
\beq
  Q_k\Lambda^{1-N} - \Lambda^{1-N}R_k = 0, \quad
  \Lambda^{N-1}\bar{Q}_k - \bar{R}_k\Lambda^{N-1}  = 0. 
\label{QR-relation}
\eeq
These relations may be thought of as part of the following 
``cyclic'' reformulation of (\ref{BC-eq}): 
\beq
\begin{aligned}
  \frac{\rd B}{\rd t_k}  &= P_kB - BQ_k,&
  \frac{\rd C}{\rd\bar{t}_k} &= \bar{P}_kC - C\bar{R}_k,\\
  \frac{\rd\Lambda^{1-N}}{\rd t_k} 
     & = Q_k\Lambda^{1-N} - \Lambda^{1-N}R_k,&
  \frac{\rd\Lambda^{N-1}}{\rd\bar{t}_k} 
     &= \bar{R}_k\Lambda^{N-1} - \Lambda^{N-1}\bar{Q}_k,\\
    \frac{\rd C^{-1}}{\rd t_k} &= R_kC^{-1} - C^{-1}P_k,&
  \frac{\rd B^{-1}}{\rd\bar{t}_k} &=  \bar{Q}_kB^{-1} - B^{-1}\bar{P}_k. 
\end{aligned}
\label{cyclic-BC-eq}
\eeq
\end{remark}

Our goal is to verify that (\ref{BC-eq}) is a consistent reduction 
of the Toda hierarchy.   To avoid unnecessary complications 
of notations, we consider the case where $D = 1$. 
As remarked above, the general case can be reduced 
to this case by rescaling $\bar{t}_k$'s.  

Let us first confirm that the modified Lax equations (\ref{BC-eq}) 
imply the Lax equations (\ref{Toda-Lax-eq}) of the Toda hierarchy 
for matrices $L$ and $\bar{L}^{-1}$ defined 
as (\ref{L-5.6}) and (\ref{Lbar-5.6}) suggest.  

\begin{lemma}
\label{lemma:BC=>Toda-Lax}
If $B$ and $C$ satisfy (\ref{BC-eq}), then 
\beqnn
  L = B\Lambda^{1-N}C^{-1},\quad 
  \bar{L}^{-1} = C\Lambda^{N-1}B^{-1}
\eeqnn
satisfy the Lax equations (\ref{Toda-Lax-eq}) of the Toda hierarchy.  
\end{lemma}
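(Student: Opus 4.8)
The plan is to verify each of the four Toda Lax equations by direct substitution: differentiate the defining products $L = B\Lambda^{1-N}C^{-1}$ and $\bar{L}^{-1} = C\Lambda^{N-1}B^{-1}$ with the Leibniz rule, insert the modified Lax equations (\ref{BC-eq}), and then use the algebraic relations (\ref{QR-relation}) to cancel the ``inner'' contributions. First I would record the derivative-of-inverse formulas $\rd C^{-1}/\rd t_k = -C^{-1}(\rd C/\rd t_k)C^{-1}$ and $\rd B^{-1}/\rd t_k = -B^{-1}(\rd B/\rd t_k)B^{-1}$ (and their $\bar{t}_k$ analogues), so that the flows of $C^{-1}$ and $B^{-1}$ read off directly from (\ref{BC-eq}); for instance $\rd C^{-1}/\rd t_k = -C^{-1}P_k + R_kC^{-1}$ and $\rd B^{-1}/\rd t_k = -B^{-1}P_k + Q_kB^{-1}$.

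For the first equation I would compute
\beqnn
  \frac{\rd L}{\rd t_k}
  &=& \frac{\rd B}{\rd t_k}\Lambda^{1-N}C^{-1}
      + B\Lambda^{1-N}\frac{\rd C^{-1}}{\rd t_k}\\
  &=& (P_kB - BQ_k)\Lambda^{1-N}C^{-1}
      + B\Lambda^{1-N}(-C^{-1}P_k + R_kC^{-1}).
\eeqnn
The outer terms assemble into $P_kB\Lambda^{1-N}C^{-1} - B\Lambda^{1-N}C^{-1}P_k = P_kL - LP_k$, while the inner terms combine into $-B(Q_k\Lambda^{1-N} - \Lambda^{1-N}R_k)C^{-1}$, which vanishes by the first relation in (\ref{QR-relation}). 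Since $P_k = (L^k)_{\ge 0}$, this is exactly $\rd L/\rd t_k = [(L^k)_{\ge 0}, L]$.

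The remaining three equations follow by the identical mechanism. Differentiating $L$ in $\bar{t}_k$ produces the inner term $-B(\bar{Q}_k\Lambda^{1-N} - \Lambda^{1-N}\bar{R}_k)C^{-1}$; differentiating $\bar{L}^{-1} = C\Lambda^{N-1}B^{-1}$ in $t_k$ produces $C(R_k\Lambda^{N-1} - \Lambda^{N-1}Q_k)B^{-1}$; and differentiating it in $\bar{t}_k$ produces $C(\bar{R}_k\Lambda^{N-1} - \Lambda^{N-1}\bar{Q}_k)B^{-1}$. Each inner term vanishes by (\ref{QR-relation}): the relation $\Lambda^{N-1}\bar{Q}_k = \bar{R}_k\Lambda^{N-1}$ is used directly, while the ``shifted'' identities $\bar{Q}_k\Lambda^{1-N} = \Lambda^{1-N}\bar{R}_k$ and $\Lambda^{N-1}Q_k = R_k\Lambda^{N-1}$ follow from (\ref{QR-relation}) by multiplying on both sides by suitable powers of $\Lambda^{\pm(N-1)}$. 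In each case the surviving outer terms give $[P_k,\cdot]$ or $[\bar{P}_k,\cdot]$ with $P_k = (L^k)_{\ge 0}$ and $\bar{P}_k = (\bar{L}^{-k})_{<0}$, which are precisely the Toda Lax equations (\ref{Toda-Lax-eq}) for $L$ and $\bar{L}^{-1}$; the corresponding equations for $\bar{L}$ then follow by conjugating with $\bar{L}^{-1}$.

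The computation is elementary bookkeeping, and the only real content is the exact cancellation of the inner terms, so the crux is the availability of the four commutation-type identities between $Q_k,R_k,\bar{Q}_k,\bar{R}_k$ and $\Lambda^{\pm(N-1)}$ encoded in (\ref{QR-relation}). The one point that needs care is that these are \emph{not} a termwise commutation of the triangular projections with $\Lambda^{1-N}$ (the projection $(\cdot)_{\ge 0}$ does not commute with $\Lambda^{1-N}$); rather they reflect the cyclic structure $\Lambda^{1-N}(C^{-1}B\Lambda^{1-N})^k = (\Lambda^{1-N}C^{-1}B)^k\Lambda^{1-N}$ together with the shift identity $(X)_{\ge 0}\Lambda^{1-N} = \Lambda^{1-N}(\Lambda^{N-1}X\Lambda^{1-N})_{\ge 0}$, which is exactly what makes the shifted projections $Q_k$ and $R_k$ compatible. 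Once (\ref{QR-relation}) is in hand there is no genuine obstacle.
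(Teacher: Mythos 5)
Your proposal is correct and follows the same route as the paper: the paper's proof is exactly this "differentiate the products, insert (\ref{BC-eq}) in its cyclic form (\ref{cyclic-BC-eq}), and cancel the inner terms via (\ref{QR-relation})" computation, which it leaves as a "straightforward calculation" and you have carried out in full. Your closing observation that (\ref{QR-relation}) holds because the triangular projection commutes with conjugation by $\Lambda^{N-1}$ (not with one-sided multiplication) is a correct and worthwhile justification of a relation the paper only asserts in a remark.
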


\begin{proof}
By straightforward calculations, one can derive from (\ref{BC-eq}) 
and (\ref{QR-relation}) --- or, equivalently, from (\ref{cyclic-BC-eq}) --- 
the Lax equations 
\beqnn
\begin{aligned}
  \frac{\rd}{\rd t_k}(B\Lambda^{1-N}C^{-1}) 
    &= [P_k, B\Lambda^{1-N}C^{-1}],\\
  \frac{\rd}{\rd\bar{t}_k}(B\Lambda^{1-N}C^{-1}) 
    &= [\bar{P}_k, B\Lambda^{1-N}C^{-1}],\\
  \frac{\rd}{\rd t_k}(C\Lambda^{N-1}B^{-1}) 
    &= [P_k, C\Lambda^{N-1}B^{-1}],\\
  \frac{\rd}{\rd\bar{t}_k}(C\Lambda^{N-1}B^{-1}) 
    &= [\bar{P}_k, C\Lambda^{N-1}B^{-1}]. 
\end{aligned}
\eeqnn
Since 
\beqnn
  P_k = (L^k)_{\ge 0}, \quad 
  \bar{P}_k = (\bar{L}^{-k})_{<0}, 
\eeqnn
these equations coincide with the Lax equations 
of the Toda hierarchy.  
\end{proof}

We can further prove that (\ref{BC-eq}) can be converted to a system 
of evolution equations for the diagonal matrices $b_n,c_n$ 
or, equivalently, the functions $b_n(s),c_n(s)$ in (\ref{general-BCop}). 

\begin{lemma}
(\ref{BC-eq}) can be reduced to a system of evolution equations 
for $b_n,c_n$, $n = 1,\ldots,N$. 
\end{lemma}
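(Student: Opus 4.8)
The plan is to show that each of the four matrix equations in (\ref{BC-eq}) automatically has a right-hand side of exactly the same shape as its left-hand side, so that reading off the coefficients of the individual powers of $\Lambda$ converts (\ref{BC-eq}) into a closed system of evolution equations for the diagonal matrices $b_1,\dots,b_N,c_1,\dots,c_N$ (equivalently, the functions $b_n(s),c_n(s)$ in (\ref{general-BCop})). Throughout I take $D=1$ and regard every object as a difference operator $\sum_j a_j\Lambda^j$ with diagonal coefficients $a_j$; these form an algebra that is closed under products and under the projections $(\ )_{\ge 0}$ and $(\ )_{<0}$. Since $B,C,\Lambda,B^{-1},C^{-1}$ all lie in this algebra, so do $L=B\Lambda^{1-N}C^{-1}$, $\bar L^{-1}=C\Lambda^{N-1}B^{-1}$, their powers, and hence $P_k,Q_k,R_k,\bar P_k,\bar Q_k,\bar R_k$. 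Consequently every right-hand side of (\ref{BC-eq}) is a difference operator with diagonal coefficients; the only point still to be pinned down is its support in powers of $\Lambda$.

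First I would record the elementary conjugation identities that follow by cancellation from the factorized forms (\ref{L-5.6}) and (\ref{Lbar-5.6}). Writing $\tilde L=\Lambda^{1-N}C^{-1}B$, $\hat L=C^{-1}B\Lambda^{1-N}$, $\bar M=B^{-1}C\Lambda^{N-1}$, $\hat{\bar M}=\Lambda^{N-1}B^{-1}C$ (these are precisely the inner quantities whose projections define $Q_k,R_k,\bar Q_k,\bar R_k$), one has
\begin{gather*}
  LB=B\tilde L,\quad LC=C\hat L,\quad \bar L^{-1}B=B\bar M,\quad \bar L^{-1}C=C\hat{\bar M},
\end{gather*}
and therefore $L^kB=B\tilde L^k$, $L^kC=C\hat L^k$, $\bar L^{-k}B=B\bar M^k$, $\bar L^{-k}C=C\hat{\bar M}^k$.

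The heart of the argument is a two-sided degree count. For $\rd B/\rd t_k=P_kB-BQ_k$, the expression as written is a difference operator with bottom degree $\ge 0$, since $P_k,Q_k$ are supported in degrees $0,\dots,k$ and $B$ in degrees $0,\dots,N$. On the other hand, substituting $P_k=L^k-(L^k)_{<0}$, $Q_k=\tilde L^k-(\tilde L^k)_{<0}$ and using $L^kB=B\tilde L^k$ gives the complementary form
\begin{gather*}
  P_kB-BQ_k=-(L^k)_{<0}B+B(\tilde L^k)_{<0},
\end{gather*}
whose top degree is $\le N-1$. Hence $\rd B/\rd t_k$ is supported in degrees $0,\dots,N-1$, exactly the shape of $\sum_{i=1}^N(\rd b_i/\rd t_k)\Lambda^{N-i}$. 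The same device handles the remaining three: for $\rd B/\rd\bar t_k$ the direct form bounds the top degree by $N-1$ while the complementary form $-(\bar L^{-k})_{\ge 0}B+B(\bar M^k)_{\ge 0}$ bounds the bottom degree by $0$; for $\rd C/\rd t_k$ and $\rd C/\rd\bar t_k$ the analogous pairs of bounds confine the support to degrees $-N,\dots,-1$, matching $\sum_{i=1}^N(\text{diagonal})\Lambda^{-i}$. In every case the two bounds are consistent because the difference is manifestly a finite-order operator (from the direct expression), so only its opposite-end degree bound need be extracted from the complementary one.

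With support and diagonality established, equating coefficients of $\Lambda^{N-i}$ in the two $B$-equations and of $\Lambda^{-i}$ in the two $C$-equations, $i=1,\dots,N$, turns (\ref{BC-eq}) into explicit flows $\rd b_i/\rd t_k=\cdots$, $\rd c_i/\rd t_k=\cdots$ and their $\bar t_k$-analogues, whose right-hand sides are difference polynomials in the $b_n(s),c_n(s)$ and their shifts; this is the asserted reduction. The step I expect to be the main obstacle is the two-sided degree count --- verifying in each of the four cases that the extreme, ``unwanted'' terms genuinely cancel --- but this is pure bookkeeping once the conjugation identities above are in hand, repeated essentially verbatim four times, with no analytic input required.
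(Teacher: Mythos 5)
Your proposal is correct and follows essentially the same route as the paper: the paper likewise rewrites each equation of (\ref{BC-eq}) in the equivalent ``dual'' form using the complementary projections (your $-(L^k)_{<0}B+B(\tilde L^k)_{<0}$, etc.) and combines the two expressions to confine the support of each right-hand side to the same band of powers of $\Lambda$ as the left-hand side, so that coefficients can be equated. Your explicit statement of the conjugation identities $L^kB=B\tilde L^k$ and the degree bookkeeping merely spells out what the paper leaves implicit.
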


\begin{proof}
The modified Lax equations (\ref{BC-eq}) have 
the equivalent (dual) form 
\beqnn
\begin{aligned}
  \frac{\rd B}{\rd t_k}  &= - P^c_kB + BQ^c_k, &
  \frac{\rd B}{\rd\bar{t}_k} &= -\bar{P}^c_kB + B\bar{Q}^c_k,\\
  \frac{\rd C}{\rd t_k} &= - P_k^cC + CR^c_k, &
  \frac{\rd C}{\rd\bar{t}_k} &= -\bar{P}^c_kC + C\bar{R}^c_k, 
\end{aligned}
\eeqnn
where 
\beqnn
\begin{aligned}
  P^c_k &= \left((B\Lambda^{1-N}C^{-1})^k\right)_{<0},&
  \bar{P}^c_k &= \left((C\Lambda^{N-1}B^{-1})^k)\right)_{\ge 0},\\
  Q^c_k &= \left((\Lambda^{1-N}C^{-1}B)^k\right)_{<0},&
  \bar{Q}^c_k &= \left((B^{-1}C\Lambda^{N-1})^k\right)_{\ge 0},\\
  R^c_k & = \left((C^{-1}B\Lambda^{1-N})^k\right)_{<0},&
  \bar{R}^c_k &= \left((\Lambda^{N-1}B^{-1}C)^k\right)_{\ge 0}. 
\end{aligned}
\eeqnn
These two expressions show that each equation of (\ref{BC-eq}) 
has such a form as 
\beqnn
\begin{aligned}
  &\frac{\rd B}{\rd t_k} = f_{1,k}\Lambda^{N-1} + \cdots + f_{N,k},& 
  &\frac{\rd B}{\rd\bar{t}_k} 
        = \bar{f}_{1,k}\Lambda^{N-1} + \cdots + \bar{f}_{N,k},\\
  &\frac{\rd C}{\rd t_k} = g_{1,k}\Lambda^{-1} + \cdots + g_{N,k}\Lambda^{-N},&
  &\frac{\rd C}{\rd\bar{t}_k} 
        = \bar{g}_{1,k}\Lambda^{-1} + \cdots + \bar{g}_{N,k}\Lambda^{-N},  
\end{aligned}
\eeqnn
hence can be reduced to the evolution equations 
\beq
  \frac{\rd b_n}{\rd t_k} = f_{n,k},\quad 
  \frac{\rd c_n}{\rd t_k} = g_{n,k},\quad 
  \frac{\rd b_n}{\rd\bar{t}_k} = \bar{f}_{n,k},\quad 
  \frac{\rd c_n}{\rd\bar{t}_k} = \bar{g}_{n,k} 
\label{bc-evol-eq}
\eeq
for $b_n$'s and $c_n$'s.  
\end{proof}

These two lemmas appear to be enough to conclude that 
(\ref{BC-eq}) defines a reduction of the Toda hierarchy.  
This is, however, a hasty judgement.   We have to confirm 
the consistency of (\ref{BC-eq}), namely, the commutativity 
\beq
\begin{gathered}
  \frac{\rd}{\rd t_k}\frac{\rd B}{\rd t_l} 
    = \frac{\rd}{\rd t_l}\frac{\rd B}{\rd t_k},\quad
  \frac{\rd}{\rd t_k}\frac{\rd B}{\rd\bar{t}_l} 
    = \frac{\rd}{\rd\bar{t}_l}\frac{\rd B}{\rd t_k},\quad
  \frac{\rd}{\rd\bar{t}_k}\frac{\rd B}{\rd\bar{t}_l} 
    = \frac{\rd}{\rd\bar{t}_l}\frac{\rd B}{\rd\bar{t}_k},\\
  \frac{\rd}{\rd t_k}\frac{\rd C}{\rd t_l} 
    = \frac{\rd}{\rd t_l}\frac{\rd C}{\rd t_k},\quad
  \frac{\rd}{\rd t_k}\frac{\rd C}{\rd\bar{t}_l} 
    = \frac{\rd}{\rd\bar{t}_l}\frac{\rd C}{\rd t_k},\quad
  \frac{\rd}{\rd\bar{t}_k}\frac{\rd C}{\rd\bar{t}_l} 
    = \frac{\rd}{\rd\bar{t}_l}\frac{\rd C}{\rd\bar{t}_k}
\end{gathered}
\label{comm-flow}
\eeq
of flows; otherwise, (\ref{BC-eq}) is not ensured 
to have non-trivial solutions.  This issue can be reduced 
to deriving a set of zero-curvature equations 
as shown below.  

\begin{lemma}
The three sets 
\beq
\begin{gathered}
  \frac{\rd P_l}{\rd t_k} - \frac{\rd P_k}{\rd t_l} + [P_l,P_k] = 0,\quad
  \frac{\rd\bar{P}_l}{\rd\bar{t}_k} - \frac{\rd\bar{P}_k}{\rd\bar{t}_l} 
    + [\bar{P}_l,\bar{P}_k] = 0,\\
  \frac{\rd\bar{P}_l}{\rd t_k} - \frac{\rd P_k}{\rd\bar{t}_l} 
    + [\bar{P}_l,P_k] = 0,
\end{gathered}
\label{PPbar-zc-eq}
\eeq
\beq
\begin{gathered}
  \frac{\rd Q_l}{\rd t_k} - \frac{\rd Q_k}{\rd t_l} + [Q_l,Q_k] = 0,\quad
  \frac{\rd\bar{Q}_l}{\rd\bar{t}_k} - \frac{\rd\bar{Q}_k}{\rd\bar{t}_l} 
    + [\bar{Q}_l,\bar{Q}_k] = 0,\\
  \frac{\rd\bar{Q}_l}{\rd t_k} - \frac{\rd Q_k}{\rd\bar{t}_l} 
    + [\bar{Q}_l,Q_k] = 0,
\end{gathered}
\label{QQbar-zc-eq}
\eeq
\beq
\begin{gathered}
  \frac{\rd R_l}{\rd t_k} - \frac{\rd R_k}{\rd t_l} + [R_l,R_k] = 0,\quad
  \frac{\rd\bar{R}_l}{\rd\bar{t}_k} - \frac{\rd\bar{R}_k}{\rd\bar{t}_l} 
    + [\bar{R}_l,\bar{R}_k] = 0,\\
  \frac{\rd\bar{R}_l}{\rd t_k} - \frac{\rd R_k}{\rd\bar{t}_l} 
    + [\bar{R}_l,R_k] = 0 
\end{gathered}
\label{RRbar-zc-eq}
\eeq
of zero-curvature equations imply the commutativity (\ref{comm-flow}) 
of flows.  
\end{lemma}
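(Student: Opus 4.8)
The plan is to recognize the commutativity relations (\ref{comm-flow}) as the Frobenius integrability conditions for the linear evolution equations (\ref{BC-eq}), and to show that the obstruction to integrability is measured exactly by the three triples of zero-curvature equations. Each relation in (\ref{comm-flow}) asserts the equality of a mixed second derivative of $B$ (or of $C$) computed in two orders, so the strategy is to differentiate one first-order equation of (\ref{BC-eq}) by the second time variable, substitute the corresponding first-order equation to eliminate the newly produced derivative of $B$ (or $C$), antisymmetrize in the two indices, and read off the residual expression. Crucially, to carry this out I only need the derivatives $\rd P_k/\rd t_l$, $\rd Q_k/\rd t_l$, etc., as they appear inside the zero-curvature equations; the explicit evolution of the generators is never required.

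As the prototype, take the $t_k$--$t_l$ flow acting on $B$. Differentiating $\rd B/\rd t_k = P_kB - BQ_k$ with respect to $t_l$ and inserting $\rd B/\rd t_l = P_lB - BQ_l$ gives
\[
  \frac{\rd}{\rd t_l}\frac{\rd B}{\rd t_k}
  = \frac{\rd P_k}{\rd t_l}B + P_kP_lB - P_kBQ_l - P_lBQ_k + BQ_lQ_k - B\frac{\rd Q_k}{\rd t_l}.
\]
The two ``conjugation'' terms $-P_kBQ_l - P_lBQ_k$ are symmetric under $k \leftrightarrow l$, so they cancel when the same expression with $k$ and $l$ exchanged is subtracted. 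What survives is
\[
  \frac{\rd}{\rd t_l}\frac{\rd B}{\rd t_k} - \frac{\rd}{\rd t_k}\frac{\rd B}{\rd t_l}
  = \Bigl(\frac{\rd P_k}{\rd t_l} - \frac{\rd P_l}{\rd t_k} + [P_k,P_l]\Bigr)B
    - B\Bigl(\frac{\rd Q_k}{\rd t_l} - \frac{\rd Q_l}{\rd t_k} + [Q_k,Q_l]\Bigr),
\]
which vanishes precisely by the first equation of (\ref{PPbar-zc-eq}) and the first equation of (\ref{QQbar-zc-eq}); the signs match because $[P_l,P_k] = -[P_k,P_l]$ turns the listed equations into the combinations appearing above.

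The remaining five relations of (\ref{comm-flow}) follow by the identical manipulation under an obvious dictionary. In every case the factors standing to the left of $B$ or $C$ assemble into a $P$-type zero-curvature combination and the factors on the right into a $Q$-type (for the $B$-flows) or an $R$-type (for the $C$-flows) combination, while the cross terms cancel exactly as above. The pure $t$-flows invoke the first equations of the relevant triples, the pure $\bar t$-flows the second, and the mixed $t_k$--$\bar t_l$ flows the third (using $\bar P_l,\bar Q_l,\bar R_l$ on one side); this is why all three components of (\ref{PPbar-zc-eq}), (\ref{QQbar-zc-eq}) and (\ref{RRbar-zc-eq}) are needed. Thus each of the six commutators in (\ref{comm-flow}) is annihilated by a matched pair of zero-curvature equations.

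The computation is entirely mechanical, with no analytic difficulty: it is the standard zero-curvature mechanism for compatible linear systems. The only step requiring genuine care is the bookkeeping --- verifying the cross-term cancellation separately in each of the six cases and tracking the signs produced by the antisymmetry of the commutators, so that the surviving brackets align exactly with the listed equations rather than their negatives or transposes. I expect this clerical matching, rather than any single conceptual step, to be the real substance of the argument.
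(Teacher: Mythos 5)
Your proposal is correct and follows essentially the same route as the paper: differentiate the modified Lax equations, substitute the first-order flows, antisymmetrize so the cross terms cancel, and identify the residue as a $P$-type zero-curvature combination acting on the left of $B$ (or $C$) minus a $Q$-type (or $R$-type) one acting on the right. The paper likewise carries out only the $t_k$--$t_l$ case for $B$ explicitly and declares the remaining five identical, so your treatment matches it in both substance and level of detail.
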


\begin{proof}
The difference of both hand sides of the first equation 
in (\ref{comm-flow}) can be rewritten as 
\begin{align*}
  \LHS - \RHS 
  &= \frac{\rd}{\rd t_k}(P_lB - BQ_l) 
     - \frac{\rd}{\rd t_l}(P_kB - BQ_k) \\
  &= \frac{\rd P_l}{\rd t_k}B + P_l(P_kB - BQ_k) 
     - (P_kB - BQ_k)Q_l - B\frac{\rd Q_l}{\rd t_k} \\
  &\quad\mbox{}
     - \frac{\rd P_k}{\rd t_k}B - P_k(P_lB - BQ_l) 
     + (P_lB - BQ_l)Q_k + B\frac{\rd Q_k}{\rd t_l}\\
  &= \left(\frac{\rd P_l}{\rd t_k} - \frac{\rd P_k}{\rd t_l} 
       + [P_l,P_k]\right)B 
     - B\left(\frac{\rd Q_l}{\rd t_k} - \frac{\rd Q_k}{\rd t_l} 
       + [Q_l,Q_k]\right). 
\end{align*}
This vanishes if the first equations of (\ref{PPbar-zc-eq}) 
and (\ref{QQbar-zc-eq}) are satisfied.  The other equations 
of (\ref{comm-flow}) can be derived in the same way.  
\end{proof}

The zero-curvature equations (\ref{PPbar-zc-eq})--(\ref{RRbar-zc-eq}) 
are indeed satisfied as we prove below.  

\begin{lemma}
(\ref{BC-eq}) implies the zero-curvature equations 
(\ref{PPbar-zc-eq})--(\ref{RRbar-zc-eq}). 
\end{lemma}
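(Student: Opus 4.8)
The plan is to reduce all three sets of zero-curvature equations to a single standard fact about the Toda hierarchy, by observing that $P_k,Q_k,R_k$ and their barred partners are the triangular projections of powers of three \emph{mutually conjugate} bare operators. I would introduce the avatars
\beqnn
  \mathcal{L} = B\Lambda^{1-N}C^{-1},\quad
  \mathcal{M} = \Lambda^{1-N}C^{-1}B = B^{-1}\mathcal{L}B,\quad
  \mathcal{N} = C^{-1}B\Lambda^{1-N} = C^{-1}\mathcal{L}C,
\eeqnn
together with the barred avatars $\bar{\mathcal{L}} = C\Lambda^{N-1}B^{-1}$, $\bar{\mathcal{M}} = B^{-1}C\Lambda^{N-1}$, $\bar{\mathcal{N}} = \Lambda^{N-1}B^{-1}C$ (I take $D=1$ throughout, as is legitimate by the earlier remark). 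By construction $P_k=(\mathcal{L}^k)_{\ge0}$, $Q_k=(\mathcal{M}^k)_{\ge0}$, $R_k=(\mathcal{N}^k)_{\ge0}$ and $\bar P_k=(\bar{\mathcal{L}}^k)_{<0}$, $\bar Q_k=(\bar{\mathcal{M}}^k)_{<0}$, $\bar R_k=(\bar{\mathcal{N}}^k)_{<0}$, so each of (\ref{PPbar-zc-eq})--(\ref{RRbar-zc-eq}) is attached to exactly one avatar pair.

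First I would show that every avatar pair obeys the full set of $2$D Toda Lax equations. For $(\mathcal{L},\bar{\mathcal{L}})=(L,\bar L^{-1})$ this is Lemma \ref{lemma:BC=>Toda-Lax}. For $\mathcal{M}$ I differentiate $\mathcal{M}=\Lambda^{1-N}C^{-1}B$, substitute (\ref{BC-eq}) together with $\rd_{t_k}C^{-1}=-C^{-1}(\rd_{t_k}C)C^{-1}$, and watch the $P_k$-terms cancel, leaving
\beqnn
  \rd_{t_k}\mathcal{M} = \Lambda^{1-N}R_kC^{-1}B - \mathcal{M}Q_k .
\eeqnn
The intertwining relation $\Lambda^{1-N}R_k=Q_k\Lambda^{1-N}$ from (\ref{QR-relation}) then converts the first term into $Q_k\mathcal{M}$, giving $\rd_{t_k}\mathcal{M}=[Q_k,\mathcal{M}]$. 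The identical computation applied to $\bar{\mathcal{M}}$, $\mathcal{N}$, $\bar{\mathcal{N}}$ (using the second relation of (\ref{QR-relation}) for the barred cases) yields $\rd_{t_k}\bar{\mathcal{M}}=[Q_k,\bar{\mathcal{M}}]$, $\rd_{\bar t_k}\mathcal{M}=[\bar Q_k,\mathcal{M}]$, $\rd_{\bar t_k}\bar{\mathcal{M}}=[\bar Q_k,\bar{\mathcal{M}}]$, and likewise for $(\mathcal{N},\bar{\mathcal{N}})$ with $(R_k,\bar R_k)$. Thus each avatar pair is a genuine Toda Lax pair whose positive and negative generators are exactly the $\ge0$ and $<0$ projections of its own powers.

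It then remains to invoke the standard implication ``Toda Lax equations $\Rightarrow$ Toda zero-curvature equations,'' applied separately to the three pairs. For the pure flows this is short: from $\rd_{t_k}\mathcal{M}^l=[Q_k,\mathcal{M}^l]$ one gets $\rd_{t_k}Q_l=([Q_k,\mathcal{M}^l])_{\ge0}$, and writing $\mathcal{M}^l=Q_l+(\mathcal{M}^l)_{<0}$, using $[\mathcal{M}^k,\mathcal{M}^l]=0$ together with the fact that $(\cdot)_{\ge0}$ and $(\cdot)_{<0}$ are subalgebras, the terms reorganize into $\rd_{t_k}Q_l-\rd_{t_l}Q_k+[Q_l,Q_k]=0$; the pure-$\bar t$ equation is the same computation with $(\cdot)_{<0}$. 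Running this for all three avatars produces the first two equations of each of (\ref{PPbar-zc-eq})--(\ref{RRbar-zc-eq}).

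The main obstacle is the \emph{mixed} equation in each set, e.g.\ $\rd_{t_k}\bar Q_l-\rd_{\bar t_l}Q_k+[\bar Q_l,Q_k]=0$, because the positive and negative avatars do not commute ($[\mathcal{M}^k,\bar{\mathcal{M}}^l]\neq0$), so the commuting-powers trick fails. Here I would exploit the interlock of the two flows supplied by the cross Lax equation $\rd_{t_k}\bar{\mathcal{M}}^l=[Q_k,\bar{\mathcal{M}}^l]$ proved above. The cleanest route is through the dressing operators of Section 4: for $(L,\bar L^{-1})$ the Sato equations give $\rd_{t_k}\bar W\,\bar W^{-1}=P_k$ and $\rd_{\bar t_l}\bar W\,\bar W^{-1}=\bar P_l-\bar L^{-l}$, and the compatibility $\rd_{\bar t_l}\rd_{t_k}\bar W=\rd_{t_k}\rd_{\bar t_l}\bar W$ collapses, via $\rd_{t_k}\bar L^{-l}=[P_k,\bar L^{-l}]$, precisely to the mixed equation of (\ref{PPbar-zc-eq}). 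Since each avatar pair satisfies the same four Lax equations, repeating this argument (equivalently, performing the identical cross-term cancellation directly with the four avatar Lax equations) delivers the mixed equations of (\ref{QQbar-zc-eq}) and (\ref{RRbar-zc-eq}). As the pure and mixed equations exhaust (\ref{PPbar-zc-eq})--(\ref{RRbar-zc-eq}), this finishes the argument.
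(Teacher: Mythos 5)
Your proposal is correct and follows essentially the same route as the paper: identify the three mutually conjugate Lax pairs $(B\Lambda^{1-N}C^{-1},C\Lambda^{N-1}B^{-1})$, $(\Lambda^{1-N}C^{-1}B,B^{-1}C\Lambda^{N-1})$, $(C^{-1}B\Lambda^{1-N},\Lambda^{N-1}B^{-1}C)$, show each satisfies the Toda Lax equations via (\ref{BC-eq}) and (\ref{QR-relation}), and then invoke the standard implication that Lax equations yield the zero-curvature equations. You merely supply details the paper leaves implicit (the explicit Lax computation for the second and third pairs, and the projection bookkeeping behind ``Lax $\Rightarrow$ zero-curvature''), which is harmless.
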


\begin{proof}
According to Lemma \ref{lemma:BC=>Toda-Lax}, 
\beqnn
  L = B\Lambda^{1-N}C^{-1},\quad 
  \bar{L}^{-1} = C\Lambda^{N-1}B^{-1}
\eeqnn
satisfy the Lax equations of the Toda hierarchy.    On the other hand, 
it is well known in the theory of the Toda hierarchy \cite{UT84} 
that the Lax equations imply the zero-curvature equations 
\beq
\begin{gathered}
  \frac{\rd(L^l)_{\ge 0}}{\rd t_k} - \frac{\rd(L^k)_{\ge 0}}{\rd t_l} 
  + [(L^l)_{\ge 0},(L^k)_{\ge 0}] = 0, \\
  \frac{\rd(\bar{L}^{-l})_{<0}}{\rd t_k} - \frac{\rd(\bar{L}^{-k})_{<0}}{\rd t_l} 
  + [(\bar{L}^{-l})_{<0},(\bar{L}^{-k})_{<0}] = 0, \\
  \frac{\rd(\bar{L}^{-l})_{<0}}{\rd t_k} 
     - \frac{\rd(L^k)_{\ge 0}}{\rd\bar{t}_l} 
    + [(\bar{L}^{-l})_{<0}, (L^k)_{\ge 0}] = 0. 
\end{gathered}
\label{Toda-zc-eq}
\eeq
Since $(L^k)_{\ge 0}$ and $(\bar{L}^{-k})_{<0}$ for $L$ and $\bar{L}$ 
of the foregoing form coincide with $P_k$ and $\bar{P}_k$, 
these zero-curvature equations are nothing but (\ref{PPbar-zc-eq}).   
In the same sense, both 
\beqnn
  L = \Lambda^{1-N}C^{-1}B, \quad 
  \bar{L}^{-1} = B^{-1}C\Lambda^{N-1}
\eeqnn
and 
\beqnn
  L = C^{-1}B\Lambda^{1-N},\quad 
  \bar{L}^{-1} = \Lambda^{N-1}B^{-1}C
\eeqnn
satisfy the Lax equations of the Toda hierarchy, 
and the associated zero-curvature equations 
are exactly (\ref{QQbar-zc-eq}) and (\ref{RRbar-zc-eq}).  
\end{proof}

We have thus confirmed the following fact.   
Note that the constant $D$ is restored in the statement.  

\begin{proposition}
(\ref{BC-eq}) defines a consistent system (generalized 
Ablowitz-Ladik hierarchy) of evolution equations 
for $b_n,c_n$, $n = 1,\ldots,N$.   This system can be identified 
with a subsystem of the Toda hierarchy for which 
the Lax operators $L,\bar{L}^{-1}$ have the factorized form 
\beq
  L = B\Lambda^{1-N}C^{-1},\quad 
  \bar{L}^{-1} = DC\Lambda^{N-1}B^{-1}. 
\label{LLbar-BC}
\eeq
\end{proposition}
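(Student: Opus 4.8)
The plan is to assemble the four preceding lemmas into a single statement; between them they already carry all the substantive content, so proving the proposition amounts to bookkeeping together with the restoration of the constant $D$. Working in the normalized case $D = 1$ throughout, I would first invoke the lemma that reduces (\ref{BC-eq}) to evolution equations: it shows that the right-hand side of each equation for $B$ lies in the span of $\Lambda^{N-1},\ldots,1$ and that of each equation for $C$ lies in the span of $\Lambda^{-1},\ldots,\Lambda^{-N}$, so that (\ref{BC-eq}) collapses to the closed system (\ref{bc-evol-eq}) of evolution equations for the finitely many diagonal matrices $b_1,\ldots,b_N,c_1,\ldots,c_N$. This confirms that (\ref{BC-eq}) is a genuine dynamical system in the unknowns $b_n,c_n$ rather than an overdetermined set of constraints on $B$ and $C$.

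Next I would establish consistency, the one point the text flags as not automatic. The target is the commutativity (\ref{comm-flow}) of the flows, and the commutativity lemma already rewrites the difference of the two sides of each equation in (\ref{comm-flow}), modulo (\ref{BC-eq}), as $(\text{zero-curvature})\,B - B\,(\text{zero-curvature})$ (and analogously for $C$), thereby reducing (\ref{comm-flow}) to the vanishing of the expressions in (\ref{PPbar-zc-eq})--(\ref{RRbar-zc-eq}). The final lemma then supplies exactly this vanishing by identifying $P_k$ and $\bar{P}_k$ (and their $Q,R$ counterparts) with the projections $(L^k)_{\ge 0}$ and $(\bar{L}^{-k})_{<0}$ of three factorized Lax pairs and appealing to the standard Toda fact \cite{UT84} that the Lax equations force the zero-curvature equations (\ref{Toda-zc-eq}). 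Chaining these two lemmas yields (\ref{comm-flow}), so the flows of (\ref{BC-eq}) mutually commute and the system admits non-trivial solutions.

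To read off the identification with the Toda hierarchy I would use Lemma \ref{lemma:BC=>Toda-Lax}: whenever $B,C$ solve (\ref{BC-eq}), the operators $L = B\Lambda^{1-N}C^{-1}$ and $\bar{L}^{-1} = C\Lambda^{N-1}B^{-1}$ satisfy the Toda Lax equations (\ref{Toda-Lax-eq}). Since these operators are of the factorized shape by construction, the factorized form is propagated by the Toda flows and the induced dynamics on this locus is precisely (\ref{BC-eq}); this is the asserted reduction. To restore a general constant $D$ I would invoke the earlier remark that $D$ is absorbed by the rescaling $\bar{t}_k \to D^{-k}\bar{t}_k$, under which the factor $D^k$ in $(DC\Lambda^{N-1}B^{-1})^k$ is compensated. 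Transporting the $D = 1$ conclusion back through this rescaling reinstates $D$ in $\bar{L}^{-1}$ and gives exactly the claimed form (\ref{LLbar-BC}).

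I expect the main obstacle to be conceptual rather than computational, and it has in effect been isolated into the consistency step. A naive reading of the first two lemmas would suggest that (\ref{BC-eq}) is automatically a reduction, but without (\ref{comm-flow}) the several flows need not be simultaneously integrable. The genuinely delicate point is the passage through the zero-curvature equations, where the non-obvious input is that the three cyclic rearrangements $B\Lambda^{1-N}C^{-1}$, $\Lambda^{1-N}C^{-1}B$ and $C^{-1}B\Lambda^{1-N}$ (with the corresponding expressions for $\bar{L}^{-1}$) all satisfy the Toda Lax equations, so that their upper-triangular projections $P_k,Q_k,R_k$ each obey a Toda family of zero-curvature identities; these are exactly the three sets (\ref{PPbar-zc-eq})--(\ref{RRbar-zc-eq}). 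Once this is in hand the remaining assembly is routine.
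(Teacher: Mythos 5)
Your proposal is correct and follows essentially the same route as the paper: the proposition is stated there precisely as the assembly of the four preceding lemmas (reduction to evolution equations for $b_n,c_n$, reduction of flow commutativity to the zero-curvature equations, and their verification via the three cyclic Lax pairs), with the constant $D$ normalized to $1$ and restored by the rescaling $\bar{t}_k \to D^{-k}\bar{t}_k$, exactly as you describe.
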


This result ensures the existence of a solution of (\ref{BC-eq}) 
with the initial values (\ref{gencon-BC}) 
at $\bst = \bar{\bst} = \bszero$.   
The associated Lax operators of the Toda hierarchy 
have the initial values (\ref{L-5.5})--(\ref{Lbar-5.5}) 
at $\bst = \bar{\bst} = \bszero$.   
This leads to the following conclusion. 

\begin{proposition}
The generalized Ablowitz-Ladik hierarchy (\ref{BC-eq}) 
has a solution with the initial values (\ref{gencon-BC}) 
at $\bst = \bar{\bst} = \bszero$.    This solution corresponds 
to the solution of the Toda hierarchy obtained from (\ref{gencon-U}) 
by the factorization problem (\ref{Uttbar-problem}).  
\end{proposition}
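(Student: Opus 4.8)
The plan is to produce a single object that is at once a solution of the generalized Ablowitz--Ladik hierarchy with the prescribed initial data and the Toda solution attached to $U$, by solving the former system and then matching the induced Toda Lax operators to those of the $U$-solution through a uniqueness argument. The existence half is essentially already in hand: the preceding results reduce (\ref{BC-eq}) to the evolution equations (\ref{bc-evol-eq}) for the diagonal matrices $b_n,c_n$, while the zero-curvature equations (\ref{PPbar-zc-eq})--(\ref{RRbar-zc-eq}) guarantee the commutativity (\ref{comm-flow}) of all flows. The over-determined system (\ref{BC-eq}) is therefore compatible, and prescribing the initial values (\ref{gencon-BC}) at $\bst=\bar{\bst}=\bszero$ singles out a unique formal solution $B(\bst,\bar{\bst}),C(\bst,\bar{\bst})$ in the time variables.

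Next I would transport this solution to the Toda side. By the consistency result just established, the operators $L=B\Lambda^{1-N}C^{-1}$ and $\bar{L}^{-1}=DC\Lambda^{N-1}B^{-1}$, with $D$ given by (\ref{gencon-D}), satisfy the Lax equations (\ref{Toda-Lax-eq}) of the Toda hierarchy. Using the initial conditions $B|_{\bst=\bar{\bst}=\bszero}=B_{(0)}$ and $C|_{\bst=\bar{\bst}=\bszero}=C_{(0)}$ together with the reassembled forms (\ref{L-5.6}) and (\ref{Lbar-5.6}), I would identify the initial Lax operators of this solution with $L_{(0)},\bar{L}_{(0)}^{-1}$, that is, with the initial values (\ref{L-5.5})--(\ref{Lbar-5.5}) already computed for the $U$-solution.

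The decisive step is the identification of the two Toda solutions. The Lax equations (\ref{Toda-Lax-eq}) form a closed, first-order evolution system for the pair $(L,\bar{L})$, since their right-hand sides depend on $(L,\bar{L})$ only through the projections $(L^k)_{\ge 0}$ and $(\bar{L}^{-k})_{<0}$; hence all Taylor coefficients of a solution in $\bst,\bar{\bst}$ are fixed recursively by the initial pair $(L,\bar{L})|_{\bst=\bar{\bst}=\bszero}$, and this initial value problem has a unique formal solution. By the standard dressing construction the $U$-solution obtained from (\ref{Uttbar-problem}) has Lax operators satisfying (\ref{Toda-Lax-eq}) with initial values $L_{(0)},\bar{L}_{(0)}$, while the solution built above from $B,C$ satisfies the same equations with the same initial data; the two therefore coincide for all $\bst,\bar{\bst}$. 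This yields $L=B\Lambda^{1-N}C^{-1}$ and $\bar{L}^{-1}=DC\Lambda^{N-1}B^{-1}$ along the entire $U$-orbit, which is precisely the asserted correspondence.

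I expect the main obstacle to lie in this last matching. The $U$-solution is defined a priori through the factorization (\ref{Uttbar-problem}) of the dressing matrices $W,\bar{W}$, which carry strictly more information than the Lax pair, so one must be careful that it is nonetheless pinned down, as a Toda solution, by its initial Lax operators alone. The resolution is that the Lax-pair evolution is self-contained: although the Ablowitz--Ladik construction need not reproduce $W,\bar{W}$, it does reproduce $L$ and $\bar{L}$, and equality of the initial Lax data already forces equality at all times, which is all that the statement requires.
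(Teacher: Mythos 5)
Your proposal is correct and follows essentially the same route as the paper: the consistency of (\ref{BC-eq}) established in the preceding propositions guarantees a solution with initial data (\ref{gencon-BC}), whose induced Toda Lax operators share the initial values (\ref{L-5.5})--(\ref{Lbar-5.5}) with the $U$-solution, and the two are then identified. The only difference is that you spell out the uniqueness of the formal initial value problem for the Lax equations (\ref{Toda-Lax-eq}), a step the paper leaves implicit.
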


\section{Conclusion}

We have thus proved that the generalized Ablowitz-Ladik hierarchy 
(\ref{BC-eq}) underlies open string amplitudes 
of topological string theory on the generalized conifolds.   
These amplitudes are assembled to the tau function 
$\tau(s,\bst,\bar{\bst})$ that is primarily formulated 
in the fermionic language.  To identify it as a solution 
of the generalized Ablowitz-Ladik hierarchy, we use 
the dictionary that translates the fermionic setting 
to the language of $\ZZ\times\ZZ$ matrices.   
The generating operator $g$ of the tau function is thereby 
converted to the matrix $U$ along with the building blocks 
such as the fermion bilinears $L_0,W_0$ 
and the vertex operators $\Gamma_{\pm}(z),\Gamma'_{\pm}(z)$.  
The subsequent consideration is based on these matrices, 
and can be summarized as follows.  

\begin{theorem}
The matrix $U$ determines a solution of the Toda hierarchy 
via the factorization problem (\ref{Uttbar-problem}).   
The associated Lax operators $L,\bar{L}^{-1}$ have 
the factorized form (\ref{LLbar-BC}).   The factors $B,C$ 
in (\ref{LLbar-BC}) are polynomials of the form (\ref{general-BC}) 
in the shift matrices $\Lambda,\Lambda^{-1}$ respectively, 
and satisfy the modified Lax equations (\ref{BC-eq}). 
Their initial values at $\bst = \bar{\bst} = \bszero$ 
and the constant $D$ are determined by the K\"ahler parameters 
as shown in (\ref{gencon-BC}) and (\ref{gencon-D}).  
\end{theorem}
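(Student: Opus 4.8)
The plan is to assemble the theorem from the pieces already established in Sections~4--6 and then close with a uniqueness argument that propagates the initial-time factorization to all times. First I would invoke the general construction of Section~4: since $U$ in (\ref{gencon-U}) is the $\ZZ\times\ZZ$ matrix representing the Clifford-group element $g$ of (\ref{gencon-g}), and $g$ meets the required genericity conditions, the factorization problem (\ref{Uttbar-problem}) determines dressing matrices $W,\bar{W}$ and hence Lax operators $L = W\Lambda W^{-1}$, $\bar{L} = \bar{W}\Lambda\bar{W}^{-1}$ obeying the Toda Lax equations (\ref{Toda-Lax-eq}). This yields the solution of the Toda hierarchy asserted in the first sentence of the theorem.

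Next I would pin down the initial data. The explicit factorization of $U$ at $\bst=\bar{\bst}=\bszero$ (Proposition~1) supplies the initial dressing operators (\ref{gencon-W})--(\ref{gencon-Wbar}), from which Proposition~2 produces the fully factorized initial Lax operators (\ref{L-5.5})--(\ref{Lbar-5.5}). Reassembling these as in (\ref{L-5.6}) and (\ref{Lbar-5.6}) gives $L_{(0)} = B_{(0)}\Lambda^{1-N}C_{(0)}^{-1}$ and $\bar{L}_{(0)}^{-1} = D\,C_{(0)}\Lambda^{N-1}B_{(0)}^{-1}$, with $B_{(0)},C_{(0)}$ the polynomials (\ref{gencon-BC}) of the form (\ref{general-BC}) and $D$ the constant (\ref{gencon-D}). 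Thus at the initial time the Lax operators already have the factorized shape (\ref{LLbar-BC}).

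The crux is to propagate this factorization to all times, and here I would use the results of Section~6. Rescaling $\bar{t}_k\to D^{-k}\bar{t}_k$ reduces to $D=1$; the consistency result of Section~6 then guarantees that the generalized Ablowitz-Ladik system (\ref{BC-eq}) has a well-defined solution $B,C$ with initial values (\ref{gencon-BC}), the commutativity of its flows being secured by the zero-curvature lemmas. By Lemma~\ref{lemma:BC=>Toda-Lax}, the operators $L = B\Lambda^{1-N}C^{-1}$ and $\bar{L}^{-1} = C\Lambda^{N-1}B^{-1}$ built from this solution satisfy the Toda Lax equations (\ref{Toda-Lax-eq}). Since they agree at $\bst=\bar{\bst}=\bszero$ with the Lax operators coming from $U$, uniqueness of the Toda flow forces the two to coincide for all $(\bst,\bar{\bst})$; hence $L,\bar{L}^{-1}$ retain the form (\ref{LLbar-BC}) with $B,C$ solving (\ref{BC-eq}), and restoring $D$ recovers (\ref{gencon-D}).

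I expect the main obstacle to be precisely this matching step: one must know that the factorized form cuts out an invariant submanifold of the Toda flows and that the restricted flow is the generalized Ablowitz-Ladik flow. The invariance is what Section~6 establishes, but the argument is delicate because the inverses $B^{-1},C^{-1}$ must be read as the one-sided power series of the Remark following (\ref{gencon-D}); only with that reading do $L$ and $\bar{L}^{-1}$ fall into the correct triangular classes, so that the uniqueness of the dressing and Lax data can legitimately be invoked. A secondary point to verify is that the genericity conditions on $g$ persist along the flow, which the explicit initial factorization of Proposition~1 makes plausible at least in a neighborhood of $\bst=\bar{\bst}=\bszero$.
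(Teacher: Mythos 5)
Your proposal is correct and follows essentially the same route as the paper: the theorem is assembled from the initial-time factorization of $U$ (Proposition~1), the explicit initial Lax operators (Proposition~2) reassembled into the form (\ref{L-5.6})--(\ref{Lbar-5.6}), and the Section~6 consistency and reduction lemmas, with the factorized form propagated to all times by matching initial data and invoking uniqueness of the Toda flow. Your closing remarks on reading $B^{-1},C^{-1}$ as one-sided power series and on the persistence of genericity are exactly the caveats the paper itself flags, so nothing further is needed.
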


A technical clue of the proof is the fact that 
the initial values (\ref{LLbar-5.1}) of the Lax operators 
at $\bst = \bar{\bst} = \bszero$ can be calculated 
in a very explicit form as shown in (\ref{L-5.2})--(\ref{Lbar-5.5}).   
It is ultimately the quantum torus algebra (\ref{q-torus-algebra}) 
and the matrix-valued quantum dilogarithm (\ref{Gamma(q)-matrix}) 
that enable us these calculations.  

Let us conclude this paper with several remarks.  

\paragraph{1.} 
The calculation of initial values can be extended 
to the Orlov-Schulman operators 
\beqnn
  M = W\Delta W^{-1} + \sum_{k=1}^\infty kt_kL^k,\quad 
  \bar{M} = \bar{W}\Delta\bar{W}^{-1} 
            - \sum_{k=1}^\infty k\bar{t}_k\bar{L}^{-k}. 
\eeqnn
In the context of the quantum torus algebra, 
it is more natural to consider the exponentials 
$q^M,q^{-\bar{M}}$ rather than $M,\bar{M}$ themselves.   
Their initial values $q^{M_{(0)}},q^{-\bar{M}_{(0)}}$ 
are determined by the initial values 
(\ref{gencon-W})--(\ref{gencon-Wbar}) 
of the dressing operators as 
\beq
  q^{M_{(0)}} = W_{(0)}q^\Delta W_{(0)}^{-1},\quad
  q^{-\bar{M}_{(0)}} = \bar{W}_{(0)}q^{-\Delta}\bar{W}_{(0)}^{-1}. 
\eeq
One can calculate the right hand side 
in much the same way as the derivation of (\ref{L-5.4}) 
and (\ref{Lbar-5.4}): 
\begin{align}
  q^{M_{(0)}} 
    &= \prod_{n=1}^N (1 - Q^{(2n-1)}q^\Delta\Lambda^{-1}) 
       (1 + Q^{(2n)}q^\Delta\Lambda^{-1})^{-1}\cdot q^\Delta,\\
  q^{-\bar{M}_{(0)}} 
    &= \prod_{n=1}^N (1 - Q^{(2n-1)-1}q^{-\Delta}\Lambda)^{-1} 
       (1 + Q^{(2n)-1}q^{-\Delta}\Lambda)\cdot q^{-\Delta}.     
\end{align}
These expressions of the initial values may be thought of 
as yet another realization of ``quantum mirror curves'' 
of the generalized conifolds \cite{GS11,Takasaki13b}.  
This issue will be addressed elsewhere. 

\paragraph{2.}
There are some other possible approaches 
to integrable structures of the type studied in this paper.  
An approach will be based on unitary matrix models 
of open string amplitudes \cite{Okuda04,ST10,OSY10}.  
As the ordinary unitary matrix model is linked 
with the Ablowitz-Ladik hierarchy \cite{KMZ96}, 
these matrix models with elliptic weight functions 
can lead to generalized Ablowitz-Ladik hierarchies.  
Another way is the line of approach initiated 
by Brini et al. \cite{Brini10,BCR11}.  
In the latest paper \cite{BCRR14}, 
they proposed a new scheme of reductions 
of the Toda hierarchy, which is apparently different 
from ours, and applied the reduced hierarchy 
to local Gromov-Witten invariants of a class 
of orbifolds of the resolved conifold.  
It is an extremely intriguing problem to compare 
their results with ours in detail.  
Since their toric Calabi-Yau threefolds, too, 
have web diagrams ``on a strip'', 
one can construct a Toda tau function 
from open string amplitudes in exactly the same manner 
as our case (see Appendix A).  

\paragraph{3.} 
It seems extremely difficult to go ``beyond the strip''. 
When the web diagram is not on a strip, 
the open string amplitudes obtained 
by the topological vertex construction cease 
to take the simple form $\langle\lambda|g|\mu\rangle$.  
A way out will be the B-model perspective 
\cite{ADKMV03,DHSV07,DHS08}, but the study of 
integrable structures on the B-model side 
require quite different techniques.  
A promising strategy on the A-model side will be 
the method of ``gluing strips'' proposed 
by Eynard et al. \cite{EKPM10}.

\subsection*{Acknowledgements}

This work is partly supported by JSPS KAKENHI Grant 
Numbers 24540223 and 25400111.

\appendix

\section{Combinatorial and fermionic expressions 
of open string amplitudes}

The method of topological vertex \cite{AKMV03} 
provides a combinatorial expression of 
open string amplitudes for toric Calabi-Yau threefolds. 
Fundamental building blocks of the construction 
are the vertex weights 
\footnote{The parameter $q$ in the early literature 
\cite{AKMV03} is replaced by $q^{-1}$ in order to 
fit the crystal model interpretation \cite{ORV03}.}
\beqnn
  C_{\alpha\beta\gamma} 
  = s_\beta(q^{-\rho})q^{-\kappa(\gamma)/2} 
    \sum_{\nu\in\calP}s_{\alpha/\nu}(q^{-\tp{\beta}-\rho})
      s_{\tp{\gamma}/\nu}(q^{-\beta-\rho}). 
\eeqnn
to be attached to the vertices of the web diagram.  
These vertex weights are ``glued together'' along with 
the weights $(-Q_n)^{|\alpha_n|}$ of the internal edges 
by summing over the partitions $\alpha_n$ assigned 
to the internal edges.  We refer details of 
the gluing procedure (in which extra ``framing factors'' 
$(-1)^{f_n}q^{f_n\kappa(\alpha_n)/2}$ are also inserted 
to adjust the canonical framing of the vertex weights) 
to Mari\~{n}o's book \cite{Marino-book}. 

To apply this method to the generalized conifolds, 
let $\alpha_0,\beta_1,\cdots,\beta_{2N},\alpha_{2N}$ 
denote the partitions assigned to the external edges 
of the web diagram (see Fig. \ref{web-diagram}).  
$\alpha_0$ and $\alpha_{2N}$ are on the leftmost 
and rightmost external edges; 
$\beta_1,\ldots,\beta_{2N}$ are on the intermediate ones. 
The amplitude $Z_{\alpha_0\beta_1\cdots\beta_{2N}\alpha_{2N}}$ 
in this setting becomes the following sum over all partitions 
$\alpha_1,\ldots,\alpha_{2N}$ on the internal edges: 
\begin{align}
  Z_{\alpha_0\beta_1\cdots\beta_{2N}\alpha_{2N}} 
  &= \sum_{\alpha_1,\ldots,\alpha_{2N}\in\calP}
     C_{\alpha_1\beta_1\alpha_0}(-Q_1)^{|\alpha_1|}
     C_{\tp{\alpha}_1\beta_2\alpha_2}(-Q_2)^{|\alpha_2|}\nonumber\\
  &\quad\mbox{}\times 
     C_{\alpha_3\beta_3\alpha_2}(-Q_3)^{|\alpha_3|}
     C_{\tp{\alpha}_3\beta_4\alpha_4}(-Q_4)^{|\alpha_4|}\nonumber\\
  &\quad\mbox{}\times\cdots \nonumber\\
  &\quad\mbox{}\times 
     C_{\alpha_{2N-1}\beta_{2N-1}\alpha_{2N-2}}(-Q_{2N-1})^{|\alpha_{2N-1}|}
     C_{\tp{\alpha}_{2N-1}\beta_{2N}\alpha_{2N}}. 
  \label{Z-vertex}
\end{align}

This combinatorial expression can be translated 
to the language of fermions as follows.  
The vertex weights have two equivalent fermionic expressions 
\begin{align}
  C_{\alpha\beta\gamma} 
  &= s_\beta(q^{-\beta})q^{-\kappa(\gamma)/2}
    \langle\tp{\gamma}|\Gamma_{-}(q^{-\beta-\rho})
    \Gamma_{+}(q^{-\tp{\beta}-\rho})|\alpha\rangle,
  \label{vertex-fermion}\\
  C_{\alpha\beta\gamma} 
  &= s_\beta(q^{-\beta})q^{-\kappa(\gamma)/2}
    \langle\tp{\alpha}|\Gamma'_{-}(q^{-\tp{\beta}-\rho})
    \Gamma'_{+}(q^{-\beta-\rho})|\gamma\rangle 
  \label{vertex-fermion'}
\end{align}
as (\ref{<ls|Gamma|ms>}) and (\ref{<ls|Gamma'|ms>}) imply.  
One can choose either (\ref{vertex-fermion}) 
or (\ref{vertex-fermion'}) appropriately in calculations. 
The edge weights, too, can be expressed as the matrix elements 
\beqnn
  (-Q_n)^{|\alpha_n|} = \langle\alpha_n|(-Q_n)^{L_0}|\alpha_n\rangle 
\eeqnn
of the ``diagonal'' operator $(-Q_n)^{L_0}$ by (\ref{<ls|L0|ms>}). 
When these expressions are plugged into (\ref{Z-vertex}), 
the somewhat troublesome factors $q^{\kappa(\alpha_n)/2}$ 
on the internal edges turn out to cancel out. 
(Actually, this is also the case for all web diagrams 
on a ``strip'' \cite{IKP04}.)   Summation over 
$\alpha_1,\ldots,\alpha_{2N-1}\in\calP$ can be interpreted 
as defining a multiple product of operators 
in the charge-0 sector of the fermion Fock space.  
Thus the amplitude becomes, apart from several prefactors, 
the matrix element of a single operator: 
\beq
  Z_{\alpha_0\beta_1\ldots\beta_{2N}\alpha_{2N}}
  = s_{\beta_1}(q^{-\rho})\cdots s_{\beta_{2N}}(q^{-\rho})
    q^{-(\kappa(\alpha_0)+\kappa(\alpha_{2N}))/2}
    \langle\tp{\alpha_0}|g_{\beta_1\ldots\beta_{2N}}|\alpha_{2N}\rangle, 
  \label{Z-fermion}
\eeq
where 
\begin{align}
  &g_{\beta_1\cdots\beta_{2N}} \nonumber\\
  &= \Gamma_{-}(q^{-\beta_1-\rho})\Gamma_{+}(q^{-\tp{\beta}_1-\rho})(-Q_1)^{L_0}
     \Gamma'_{-}(q^{-\tp{\beta_2}-\rho})\Gamma'_{+}(q^{-\beta_2-\rho})(-Q_2)^{L_0}
     \nonumber\\
  &\quad\mbox{}\times 
     \Gamma_{-}(q^{-\beta_3-\rho})\Gamma_{+}(q^{-\tp{\beta}_3-\rho})(-Q_3)^{L_0}
     \Gamma'_{-}(q^{-\tp{\beta}_4-\rho})\Gamma'_{+}(q^{-\beta_4-\rho})(-Q_4)^{L_0}
     \nonumber\\
  &\quad\mbox{}\times\cdots\nonumber\\
  &\quad\mbox{}\times 
     \Gamma_{-}(q^{-\beta_{2N-1}-\rho})\Gamma_{+}(q^{-\tp{\beta_}{2N-1}-\rho})
     (-Q_{2N-1})^{L_0}
    \Gamma'_{-}(q^{-\tp{\beta}_{2N}-\rho})\Gamma'_{+}(q^{-\beta_{2N}-\rho}). 
\end{align}

The operator $g$ of (\ref{gencon-g}) emerges by letting 
\beqnn
  \alpha_0 = \tp{\lambda},\quad 
  \beta_1 = \cdots = \beta_{2N} = \emptyset,\quad 
  \alpha_{2N} = \mu 
\eeqnn
and flipping the sign $Q_n \to -Q_n$ of all $Q_n$'s.  
More precisely, $g$ is related to the operator 
$g_{\emptyset\cdots\emptyset}$ as 
\beq
  g = q^{W_0/2}g_{\emptyset\cdots\emptyset}|_{Q_n\to -Q_n}q^{-W_0/2}. 
\eeq
Note here that 
$q^{\pm W_0/2}$ are ``diagonal'' operators with the matrix elements 
\beqnn
  \langle\lambda|q^{\pm W_0/2}|\mu\rangle 
  = \delta_{\lambda\mu}q^{\pm(\kappa(\lambda)+|\lambda|)/2} 
\eeqnn
as one can see from (\ref{<ls|L0|ms>}) and (\ref{<ls|W0|ms>}). 
Also recall the anti-symmetric property 
\beqnn
  \kappa(\tp{\lambda}) = - \kappa(\lambda)
\eeqnn
of the quadratic Casimir invariant. 
Consequently, the matrix elements of $g$ in the charge-0 sector 
and the open string amplitudes $Z_{\tp{\lambda}\emptyset\cdots\emptyset\mu}$
are related as 
\beq
  \langle\lambda|g|\mu\rangle 
  = q^{(|\lambda|-|\mu|)/2}
    Z_{\tp{\lambda}\emptyset\cdots\emptyset\mu}|_{Q_n\to -Q_n}. 
\eeq
Thus the essential features of the open string amplitudes 
are fully captured by the matrix elements of $g$.  
The prefactor $q^{(|\lambda|-|\mu|)/2}$ is harmless 
--- it can be absorbed into rescaling 
of the variables $\bsx,\bsy$ of the Schur functions 
in the generating function (\ref{gen-function}). 

The fermionic expression (\ref{Z-fermion}) of 
open string amplitudes can be generalized 
to other web diagrams on a strip \cite{EK03,Nagao09,Sulkowski09}.  
It will be obvious that those amplitudes, too, 
yield a tau function of the Toda hierarchy.

\end{document}